\DeclareMathAlphabet{\pazocal}{OMS}{zplm}{m}{n}
\newtheorem{theorem}{Theorem}
\newtheorem{proof}{Proof}
\newtheorem{definition}{Definition}
\newtheorem{protocol}{Protocol}
\newcommand{\says}[2]{{\color{blue}{#1 says: }{#2}}\xspace}
\newcommand{\myexp}[1]{\ensuremath{e^{#1}}\xspace}
\renewcommand{\Pr}[1]{\ensuremath{\mathsf{Pr}\left[#1\right]}\xspace}
\newcommand{\EV}[1]{\ensuremath{\mathrm{E}\left[\, #1 \,\right]}}
\newcommand{\Var}[1]{\ensuremath{\mathrm{Var}\left[\, #1 \,\right]}}
\newcommand{\Lap}[1]{\ensuremath{\mathsf{Lap}\left(#1\right)}\xspace}
\newcommand{\mypara}[1]{\vspace*{0.05in}\noindent \textbf{#1}}
\renewcommand{\AA}{\mathbf{A}}
\newcommand{\rappor}{\noindent RAPPOR\xspace}
\newcommand{\answerSpace}{[d]}
\newcommand{\report}{\textbf{y}}
\newcommand{\reportM}{\textbf{m}}
\newcommand{\protRandom}{Direct Encoding}
\newcommand{\protHistogram}{Histogram Encoding}
\newcommand{\DE}{\ensuremath{\mathrm{DE}}\xspace}
\newcommand{\SHE}{\ensuremath{\mathrm{SHE}}\xspace}
\newcommand{\THE}{\ensuremath{\mathrm{THE}}\xspace}
\newcommand{\HE}{\ensuremath{\mathrm{HE}}\xspace}
\newcommand{\UE}{\ensuremath{\mathrm{UE}}\xspace}
\newcommand{\SUE}{\ensuremath{\mathrm{SUE}}\xspace}
\newcommand{\OUE}{\ensuremath{\mathrm{OUE}}\xspace}
\newcommand{\BLH}{\ensuremath{\mathrm{BLH}}\xspace}
\newcommand{\LH}{\ensuremath{\mathrm{LH}}\xspace}
\newcommand{\OLH}{\ensuremath{\mathrm{OLH}}\xspace}
\newcommand{\ELP}{\ensuremath{\mathrm{ELP}}\xspace}
\newcommand{\PE}{\ensuremath{\mathsf{PE}}\xspace}
\newcommand{\protPartition}{Partition\xspace}
\newcommand{\protUnary}{Unary Encoding\xspace}
\newcommand{\protProjection}{Localized Binary Hashing\xspace}
\newcommand{\ldpsys}{\textbf{\texttt{PPO}}\xspace}
\newcommand{\argmax}{\mathop{\mathrm{argmax}}}
\newcommand{\ind}{\mathbbm{1}}
\setlist[itemize]{itemsep=2pt, topsep=2pt}
\begin{document}
\sloppypar
\newcommand{\fullTitle}{Optimizing Locally Differentially Private Protocols\\
\large Accepted by Usenix'17}

\title{\fullTitle}

\author{
	{\rm Tianhao Wang, Jeremiah Blocki, Ninghui Li}\\
	Purdue University
	\and
	{\rm Somesh Jha}\\
	University of Wisconsin-Madison
} 


\maketitle

\begin{abstract}
    Protocols satisfying Local Differential Privacy (LDP) enable parties
to collect aggregate information about a population while protecting
each user's privacy, without relying on a trusted third party. LDP
protocols (such as Google's RAPPOR) have been deployed in real-world
scenarios. In these protocols, a user encodes his private information
and perturbs the encoded value locally before sending it to an
aggregator, who combines values that users contribute to infer
statistics about the population.  In this paper, we introduce a
framework that generalizes several LDP protocols proposed in the
literature. Our framework yields a simple and fast aggregation
algorithm, whose accuracy can be precisely analyzed. Our in-depth
analysis enables us to choose optimal parameters, resulting in two new
protocols (i.e., Optimized Unary Encoding and Optimized Local Hashing)
that provide better utility than protocols previously proposed.  We
present precise conditions for when each proposed
protocol should be used, and perform experiments that demonstrate the advantage of our proposed protocols. 

\end{abstract}

\section{Introduction}

Differential privacy~\cite{Dwo06,DMNS06} has been increasingly
accepted as the \textit{de facto} standard for data privacy in the
research community.  While many differentially
private algorithms have been developed for
data publishing and analysis~\cite{dpbook,Li2016book}, there have been few deployments of
such techniques.  Recently, techniques for satisfying differential
privacy (DP) in the local setting, which we call {LDP}, have been
deployed.  Such techniques enable gathering of statistics while
preserving privacy of every user, without relying on trust in a single
data curator.  For example, researchers from Google
developed \rappor~\cite{rappor,rappor2}, which is included as part of
Chrome.  It enables Google to collect users' answers to questions such as
the default homepage of the browser, the default search engine, and so on, 
to understand the unwanted or malicious hijacking of user settings.
Apple~\cite{Apple} also uses similar methods to
help with predictions of spelling and other things, but the details of
the algorithm are not public yet.  Samsung proposed a similar
system~\cite{samsung} which enables collection of not only categorical
answers (e.g., screen resolution) but also numerical answers (e.g.,
time of usage, battery volume), although it is not clear whether this
has been deployed by Samsung.

A basic goal in the LDP setting is frequency estimation.  A protocol
for doing this can be broken down into following steps: For each
question, each user \textbf{encodes} his or her answer (called input) into a specific format, \textbf{randomizes} the
encoded value to get an output, and then sends the output to the aggregator, who then \textbf{aggregates} and decodes
the reported values to obtain, for each value of interest, an
estimate of how many users have that value.

We introduce a framework for what we call ``pure'' LDP protocols, which has a nice symmetric property.  
We introduce a simple, generic aggregation and decoding technique that works for
all pure LDP protocols, and prove that this technique results in an
unbiased estimate.  We also present a formula for the variance of the
estimate.  Most existing protocols fit our proposed framework.  The framework
 also enables us to precisely analyze and compare the accuracy of
different protocols, and generalize and optimize them.  For example,
we show that the Basic RAPPOR protocol~\cite{rappor}, which essentially uses unary
encoding of input, chooses sub-optimal parameters for the
randomization step.  Optimizing the parameters results in what we call
the Optimized Unary Encoding (\OUE) protocol, which has significantly
better accuracy.

Protocols based on unary encoding require $\Theta(d)$ communication cost, where $d$ is the number of possible
input values, and can be very large (or even unbounded) for some applications. The RAPPOR protocol
uses a Bloom filter encoding to reduce the communication cost; however,
this comes with a cost of decreasing accuracy as well as increasing computation
cost for aggregation and decoding. The random matrix projection-based approach introduced
in~\cite{Bassily2015local} has $\Theta(\log n)$ communication cost
(where $n$ is the number of users); however, its accuracy is unsatisfactory.
We observe that in our framework this protocol can be interpreted as binary
local hashing.  Generalizing this and optimizing the parameters
results in a new Optimized Local Hashing (\OLH) protocol, which
provides much better accuracy while still
requiring $\Theta(\log n)$ communication cost.  The variance of \OLH
is orders of magnitude smaller than the previous methods, for
$\epsilon$ values used in RAPPOR's implementation.
Interestingly, \OLH has the same error variance as \OUE; thus it reduces communication cost at no cost of utility.

With LDP, it is possible to collect data that was inaccessible because of privacy issues.
Moreover, the increased amount of data will significantly improve the performance of some learning tasks.
Understanding customer statistics help cloud server and software platform operators to
better understand the needs of populations and offer more effective and reliable services.
Such privacy-preserving crowd-sourced statistics are also useful for providing better security while maintaining a level of privacy.
For example, in~\cite{rappor}, it is demonstrated that such techniques can be applied to collecting windows process names and
Chrome homepages to discover malware processes and unexpected default homepages (which could be malicious).

Our paper makes the following contributions:
\begin{itemize}
	\item We introduce a framework for ``pure'' LDP protocols, and
	develop a simple, generic aggregation and decoding technique
	that works for all such protocols.  This framework enables us
	to analyze, generalize, and optimize different LDP protocols.
	
	\item We introduce the Optimized Local Hashing (\OLH)
    protocol, which has low communication cost and provides much
    better accuracy than existing protocols.  For $\epsilon \approx
    4$, which was used in the RAPPOR implementation, the variance
    of \OLH's estimation is $1/2$ that of
    RAPPOR, and close to $1/14$ that of Random
    Matrix Projection~\cite{Bassily2015local}. Systems using LDP as a primitive could benefit significantly by adopting improved LDP protocols like \OLH.
	
\end{itemize}


\textbf{Roadmap.} In Section~\ref{sec:background}, we describe existing
protocols from~\cite{rappor,Bassily2015local}.  We then present our
framework for pure LDP protocols in Section~\ref{sec:framework}, apply
it to study LDP protocols in Section~\ref{sec:protocols}, and compare
different LDP protocols in Section~\ref{sec:compare}.  We show experimental results in Section~\ref{sec:eval}.
We discuss related work in
Section~\ref{sec:related} and conclude in Section~\ref{sec:conclude}.

\section{Background and Existing Protocols}\label{sec:background}

The notion of differential privacy was originally introduced for the setting where there is a \textbf{trusted data curator}, who gathers data from individual users, processes the data in a way that satisfies DP, and then publishes the results.  Intuitively, the DP notion requires that any single element in a dataset has only a limited impact on the output.

\begin{definition}[Differential Privacy] \label{def:diff}
An algorithm $\AA$ satisfies $\epsilon$-differential privacy ($\epsilon$-DP), where $\epsilon\geq 0$,
if and only if for any datasets $D$ and $D'$ that \emph{differ in one element}, we have
\begin{equation*}
\forall{t\!\in\! \mathit{Range}(\AA)}:\; \Pr{\AA(D) = t} \leq e^{\epsilon}\, \Pr{\AA(D')= t}, 
\end{equation*}
where $\mathit{Range}(\AA)$ denotes the set of all possible outputs of the algorithm $\AA$.
\end{definition}

\subsection{Local Differential Privacy Protocols}

In the local setting, there is no trusted third party.  An \textbf{aggregator} wants to gather information from \textbf{users}.  Users are willing to help the aggregator, but do not fully trust the aggregator for privacy. For the sake of privacy, each user perturbs her own data before sending it to the aggregator (via a secure channel).
For this paper, we consider that each user has a single value $v$, which can be viewed as the user's answer to a given question.  The aggregator aims to find out the frequencies of values among the population.  Such a data collection protocol consists of the following algorithms:
\begin{itemize}

 \item
$\mathsf{Encode}$ is executed by each user.  The algorithm takes an input value $v$ and outputs an encoded value $x$.

 \item
$\mathsf{Perturb}$, which takes an encoded value $x$ and outputs $y$.  Each user with value $v$ reports  $y=\mathsf{Perturb}(\mathsf{Encode}(v))$.
For compactness, we use $\PE(\cdot)$ to denote the composition of the encoding and perturbation algorithms, i.e., $\PE(\cdot)=\mathsf{Perturb}(\mathsf{Encode}(\cdot))$.
$\PE(\cdot)$ should satisfy $\epsilon$-local differential privacy, as defined below.

 \item
$\mathsf{Aggregate}$ is executed by the aggregator; it takes all the reported values, and outputs aggregated information.
\end{itemize}




\begin{definition}[Local Differential Privacy] \label{def:dlp}
    An algorithm $\AA$ satisfies $\epsilon$-local differential privacy ($\epsilon$-LDP), where $\epsilon \geq 0$,
    if and only if for any input $v_1$ and $v_2$, we have
    \begin{equation*}
    \forall{y \in \mathit{Range}(\AA)}:\; \Pr{\AA(v_1)=y} \leq e^{\epsilon}\, \Pr{\AA(v_2)=y},
    \end{equation*}
    where $\mathit{Range}(\AA)$ denotes the set of all possible outputs of the algorithm $\AA$.
\end{definition}

This notion is related to
\emph{randomized response}~\cite{Warner65}, which is a decades-old technique in social science to collect statistical information about embarrassing or illegal behavior.  To report a single bit by random response, one reports the true value with probability $p$ and the flip of the true value with probability $1-p$.  This satisfies $\left(\ln\frac{p}{1-p}\right)$-LDP.

Comparing to the setting that requires a trusted data curator, the local setting offers a stronger level of protection, because the aggregator sees only perturbed data.  Even if the aggregator is malicious and colludes with all other participants, one individual's private data is still protected according to the guarantee of LDP.



\mypara{Problem Definition and Notations.}
There are $n$ users. Each user $j$ has one value $v^j$ and report once.
We use $d$ to denote the size of the domain of the values the users have, and $[d]$ to denote the set $\{1, 2, \ldots, d\}$.
Without loss of generality, we assume the input domain is $[d]$.
The most basic goal of $\mathsf{Aggregate}$ is \textbf{frequency estimation}, i.e., estimate, for a given value $i \in [d]$, how many users have the value $i$.
Other goals have also been considered in the literature.  One goal is, when $d$ is very large, identify values in $[d]$ that are frequent, without going through every value in $[d]$~\cite{rappor2,Bassily2015local}.
In this paper, we focus on frequency estimation.  This is the most basic primitive and is a necessary building block for all other goals.  Improving this will improve effectiveness of other protocols.

\subsection{Basic RAPPOR}\label{sec:basic_rappor}

RAPPOR~\cite{rappor} is designed to enable longitudinal collections, where the collection happens multiple times.  Indeed, Chrome's implementation of RAPPOR~\cite{rapporlink} collects answers to some questions once every $30$ minutes. Two protocols, Basic RAPPOR and RAPPOR, are proposed in~\cite{rappor}. We first describe Basic RAPPOR.

\mypara{Encoding.}
$\mathsf{Encode}(v)=B_0$, where $B_0$ is a length-$d$ binary vector such that $B_0[v]=1$ and $B_0[i]=0$ for $i \ne v$.  We call this \textbf{Unary Encoding}.

\mypara{Perturbation.}  $\mathsf{Perturb}(B_0)$ consists of two steps:

\noindent{\it Step 1: Permanent randomized response:} Generate $B_1$ such that:
\[
\Pr{ B_1[i] = 1}  = \left\{
\begin{array}{lr}
1-\frac{1}{2}f, & \mbox{if} \; B_0[i] = 1, \\
\frac{1}{2}f, & \mbox{if} \; B_0[i] = 0. \\
\end{array}
\right.
\]
RAPPOR's implementation uses $f=1/2$ and $f=1/4$.  Note that this randomization is \textbf{symmetric} in the sense that $\Pr{B_1[i]=1|B_0[i]=1}=\Pr{B_1[i]=0|B_0[i]=0}=1-\frac{1}{2}f$; that is, the probability that a bit of $1$ is preserved equals the probability that a bit of $0$ is preserved.
This step is carried out only once for each value $v$ that the user has.  

\noindent
{\it Step 2: Instantaneous randomized response:} Report $B_2$ such that:
\[
\Pr{ B_2[i] = 1}  = \left\{
\begin{array}{lr}
p, & \mbox{if} \; B_1[i] = 1, \\
q, & \mbox{if} \; B_1[i] = 0. \\
\end{array}
\right.
\]
This step is carried out each time a user reports the value.  That is, $B_1$ will be perturbed to generate different $B_2$'s for each reporting.  RAPPOR's implementation~\cite{RAPPORcode} uses $p=0.75$ and $q=0.25$, and is hence also symmetric because $p+q=1$.

We note that as both steps are symmetric, their combined effect can also be modeled by a symmetric randomization.  Moreover, we study the problem where each user only reports once. Thus without loss of generality, we ignore the instantaneous randomized response step and consider only the permanent randomized response when trying to identify effective protocols.

\mypara{Aggregation.}  
Let $B^j$ be the reported vector of the $j$-th user.  Ignoring the Instantaneous randomized response step, to estimate the number of times $i$ occurs, the aggregator computes:
\begin{align*}
\tilde{c}(i) &= \frac{\sum_{j}\ind_{\{i\mid B^j[i] = 1\}}(i) - \frac{1}{2}fn }{1-f}
\end{align*}
That is, the aggregator first counts how many time $i$ is reported by computing $\sum_{j}\ind_{\{i\mid B^j[i] = 1\}}(i)$, which counts how many reported vectors have the $i$'th bit being $1$, and then corrects for the effect of randomization. We use $\ind_X(i)$ to denote the indicator function such that:
\[
\ind_X(i)  = \left\{
\begin{array}{lr}
1, & \mbox{if} \; i\in X, \\
0, & \mbox{if} \; i\notin X. \\
\end{array}
\right.
\]

\mypara{Cost.}  The communication and computing cost is $\Theta(d)$ for each user, and $\Theta(nd)$ for the aggregator.

\mypara{Privacy.} Against an adversary who may observe multiple transmissions, this achieves $\epsilon$-LDP for $\epsilon= \ln \left(\left(\frac{1-\frac{1}{2}f}{\frac{1}{2}f}\right)^2\right)$, which is $\ln 9$ for $f=1/2$ and $\ln 49$ for $f=1/4$.

\subsection{RAPPOR}\label{sec:rappor}

Basic RAPPOR uses unary encoding, and does not scale when $d$ is large.  To address this problem, RAPPOR uses Bloom filters~\cite{Bloom70}.
While Bloom filters are typically used to encode a set for membership testing, in RAPPOR it is used to encode a single element.

\mypara{Encoding.} Encoding uses a set of $m$ hash functions $\mathbb{H} = \{H_1, H_2, \ldots, H_m\}$, each of which outputs an integer
in $[k]=\{0,1,\ldots,k-1\}$.
$\mathsf{Encode}(v)=B_0$, which is $k$-bit binary vector such that
\[
B_0[i] = \left\{
\begin{array}{lr}
1, & \mbox{if} \;\;\;\;\; \exists H \in \mathbb{H}, s.t., H(v) = i, \\
0, & \mbox{otherwise.} \\
\end{array}
\right.
\]

\mypara{Perturbation.} The perturbation process is identical to that of Basic RAPPOR.

\mypara{Aggregation.}
The use of shared hashing creates challenges due to potential collisions.  If two values happen to be hashed to the same set of indices, it becomes impossible to distinguish them.  To deal with this problem, RAPPOR introduces the concept of cohorts.  The users are divided into a number of cohorts.  Each cohort uses a different set of hash functions, so that the effect of collisions is limited to within one cohort.
However, partial collisions, i.e., two values are hashed to overlapping (though not identical) sets of indices, can still occur and interfere with estimation.  These complexities make the aggregation algorithm more complicated.  RAPPOR uses LASSO and linear regression to estimate frequencies of values.

\mypara{Cost.}  The communication and computing cost is $\Theta(k)$ for each user.  
The aggregator's computation cost is higher than Basic RAPPOR due to the usage of LASSO and regression. 

\mypara{Privacy.}  RAPPOR achieves $\epsilon$-LDP for $\epsilon=\ln \left(\left(\frac{1-\frac{1}{2}f}{\frac{1}{2}f}\right)^{2m}\right)$.  The RAPPOR implementation uses $m=2$; thus this is $\ln 81 \approx 4.39$ for $f=1/2$ and $\ln 7^4 \approx 7.78$ for $f=1/4$.

\subsection{Random Matrix Projection}\label{sec:over:matrix}

Bassily and Smith~\cite{Bassily2015local} proposed a protocol that uses random matrix projection.  This protocol has an additional Setup step.

\mypara{Setup.} The aggregator generates a public matrix $\Phi\in\{-\frac{1}{\sqrt{m}}, \frac{1}{\sqrt{m}}\}^{m\times d}$ uniformly at random.
Here $m$ is a parameter determined by the error bound,
where the ``error'' is defined as the maximal distance between the estimation and true frequency of any domain.

\mypara{Encoding.} $\mathsf{Encode}(v)=\langle r,x\rangle$, where $r$ is selected uniformly at random from $[m]$, and $x$ is the $v$'s element of the $r$'s row of $\Phi$, i.e., $x=\Phi[r, v]$.

\mypara{Perturbation.} $\mathsf{Perturb}(\langle r,x\rangle)=\langle r,\; b\cdot c\cdot m\cdot x\rangle$, where
\begin{align*}
b &  = \left\{
\begin{array}{rl}
1  & \;\;\;\mbox{ with probability } p=\frac{e^\epsilon}{e^\epsilon+1},
 \\
-1  & \;\;\; \mbox{ with probability } q=\frac{1}{e^\epsilon+1},\\
\end{array}
\right.
\\
c & =(e^\epsilon+1)/(e^\epsilon-1).
\end{align*}

\mypara{Aggregation.}
Given reports $\langle r^j,y^j \rangle$'s, the estimate for $i \in [d]$ is given by
\begin{align*}
\tilde{c}(i) &= \sum_j y^j \cdot \Phi[r^j,i].
\end{align*}
The effect is that each user with input value $i$ contributes $c$ to $\tilde{c}(i)$ with probability $p$, and $-c$ with probability $q$; thus the expected contribution is
$$(p-q)c=\left(\frac{e^\epsilon}{e^\epsilon+1} - \frac{1}{e^\epsilon+1}\right)\cdot \frac{e^\epsilon+1}{e^\epsilon-1} = 1.$$
Because of the randomness in $\Phi$, each user with value $\ne i$ contributes to $\tilde{c}(i)$ either $c$ or $-c$, each with probability $1/2$; thus the expected contribution from all such users is $0$.
Note that each row in the matrix is essentially a random hashing function mapping each value in $[d]$ to a single bit.  Each user selects such a hash function, uses it to hash her value into one bit, and then perturb this bit using random response.


\mypara{Cost.}  A straightforward implementation of the protocol is expensive.
However, the public random matrix $\Phi$ does not need to be explicitly computed.  For example, using a common pseudo-random number generator, each user can randomly choose a seed to generate a row in the matrix and send the seed in her report.
With this technique, the communication cost is $\Theta(\log m)$ for each user, and the computation cost is $O(d)$ for computing one row of the $\Phi$.
The aggregator needs $\Theta(dm)$ to generate $\Phi$, and $\Theta(md)$ to compute the estimations.



\section{A Framework for LDP Protocols}\label{sec:framework}

Multiple protocols have been proposed for estimating frequencies under LDP, and one can envision other protocols.  A natural research question is how do they compare with each other? Under the same level of privacy, which protocol provides better accuracy in aggregation, with lower cost?  Can we come up with even better ones?  To answer these questions, we define a class of LDP protocols that we call ``pure''.


For a protocol to be pure, we require the specification of an additional function $\mathsf{Support}$,
which maps each possible output $y$ to a set of input values that $y$ ``supports''. For example, in the basic RAPPOR protocol, an output binary vector $B$ is interpreted as supporting each input whose corresponding bit is $1$, i.e., $\mathsf{Support}(B)=\{ i \mid B[i]=1\}$.

\begin{definition}[Pure LDP Protocols]
A protocol given by $\PE$ and $\mathsf{Support}$ is pure if and only if there exist two probability values $p^* > q^*$ such that for all $v_1$,
\begin{align*}
\Pr{\PE(v_1) \in \{ y \mid v_1 \in \mathsf{Support}(y)\}}  = p^*,
\\
\forall_{v_2 \ne v_1} \Pr{\PE(v_2) \in \{ y \mid v_1 \in \mathsf{Support}(y) \}} = q^*.
\end{align*}
\end{definition}
A pure protocol is in some sense ``pure and simple''.  For each input $v_1$, the set $\{ y \mid v_1 \in \mathsf{Support}(y)\}$ identifies all outputs $y$ that ``support'' $v_1$, and can be called the support set of $v_1$.  A pure protocol requires the probability that any value $v_1$ is mapped to its own support set be the same for all values.  In order to satisfy LDP, it must be possible for a value $v_2 \ne v_1$ to be mapped to $v_1$'s support set.  It is required that this probability must be the same for all pairs of $v_1$ and $v_2$.
Intuitively, we want $p^*$ to be as large as possible, and $q^*$ to be as small as possible.  However, satisfying $\epsilon$-LDP requires that $\frac{p^*}{q^*} \leq e^{\epsilon}$.

Basic RAPPOR is pure with $p^*=1-\frac{f}{2}$ and $q^*=\frac{f}{2}$.  RAPPOR is not pure because there does not exist a suitable $q^*$ due to collisions in mapping values to bit vectors.  Assuming the use of two hash functions, if $v_1$ is mapped to $[1,1,0,0]$, $v_2$ is mapped to $[1,0,1,0]$, and $v_3$ is mapped to $[0,0,1,1]$, then because
$[1,1,0,0]$ differs from $[1,0,1,0]$ by only two bits, and from $[0,0,1,1]$ by four bits, the probability that $v_2$ is mapped to $v_1$'s support set is higher than that of $v_3$ being mapped to $v_1$'s support set.


For a pure protocol, let $y^j$ denote the submitted value by user $j$, a simple aggregation technique to estimate the number of times that $i$ occurs is as follows:
\begin{align}
\tilde{c}(i)= \frac{\sum_j \ind_{\mathsf{Support}(y^j)} (i) - n  q^*}{p^* - q^*} \label{eq:absestimate}
\end{align}
The intuition is that each output that supports $i$ gives an count of $1$ for $i$.  However, this needs to be normalized, because even if every input is $i$, we only expect to see $n \cdot p^*$ outputs that support $i$, and even if input $i$ never occurs, we expect to see $n\cdot q^*$ supports for it.  Thus the original range of $0$ to $n$ is ``compressed'' into an expected range of $nq^*$ to $np^*$.  The linear transformation in \eqref{eq:absestimate} corrects this effect.

\begin{theorem}\label{thm:pure_unbiased}
For a pure LDP protocol $\PE$ and $\mathsf{Support}$, \eqref{eq:absestimate} is unbiased, i.e., $\forall_i \EV{\tilde{c}(i)} = n f_i$, where $f_i$ is the fraction of times that the value $i$ occurs.
\end{theorem}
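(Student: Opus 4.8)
The plan is to exploit linearity of expectation together with the two defining probabilities $p^*$ and $q^*$ of a pure protocol. Since $\tilde{c}(i)$ in \eqref{eq:absestimate} is an affine function of the random quantity $\sum_j \ind_{\mathsf{Support}(y^j)}(i)$, while the denominator $p^* - q^*$ and the subtracted term $nq^*$ are constants, it suffices to compute $\EV{\sum_j \ind_{\mathsf{Support}(y^j)}(i)}$ and substitute. By linearity this further reduces to computing $\EV{\ind_{\mathsf{Support}(y^j)}(i)}$ for each individual user $j$ separately.

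The key step is to read the indicator correctly: $\ind_{\mathsf{Support}(y^j)}(i) = 1$ exactly when $i \in \mathsf{Support}(y^j)$, which by the definition of the support set of $i$ is equivalent to $y^j \in \{y \mid i \in \mathsf{Support}(y)\}$. Hence $\EV{\ind_{\mathsf{Support}(y^j)}(i)} = \Pr{\PE(v^j) \in \{y \mid i \in \mathsf{Support}(y)\}}$, where $v^j$ denotes user $j$'s true value. At this point I would invoke the pure-protocol definition with $v_1 = i$: if user $j$ holds $v^j = i$, this probability is exactly $p^*$; if $v^j \ne i$, it is exactly $q^*$. This is the crux of the argument and the only place where purity is actually used.

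To finish, let $n f_i$ be the number of users whose value is $i$. Summing the per-user expectations according to this classification gives
\begin{align*}
\EV{\sum_j \ind_{\mathsf{Support}(y^j)}(i)} = n f_i \, p^* + n(1 - f_i)\, q^*.
\end{align*}
Substituting into \eqref{eq:absestimate} and simplifying, the $nq^*$ terms cancel and the common factor $p^* - q^*$ divides out, leaving $\EV{\tilde{c}(i)} = n f_i$, as claimed.

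The only real subtlety — and the step most easily botched — is the bookkeeping in the second paragraph: one must apply the pure definition with $v_1$ fixed to the \emph{query} value $i$, not to the user's own value, and then classify each user as a ``$p^*$-contributor'' or a ``$q^*$-contributor'' according to whether $v^j = i$. Everything else is routine linearity and algebra. In particular, no independence assumption across users is needed, since linearity of expectation holds regardless of the joint distribution of the reports.
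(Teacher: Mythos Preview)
Your proposal is correct and follows essentially the same approach as the paper's proof: both compute $\EV{\sum_j \ind_{\mathsf{Support}(y^j)}(i)} = n f_i p^* + n(1-f_i) q^*$ by invoking the pure-protocol definition per user, then substitute and simplify. Your write-up is in fact more explicit than the paper's about why each user contributes $p^*$ or $q^*$ (the paper jumps directly to the sum), but the underlying argument is identical.
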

\begin{proof}
\begin{align*}
	\EV{ \tilde{c}(i)} = & \EV{\frac{ \left(\sum_j \ind_{\mathsf{Support}(y^j)} (i)\right) - nq^* }{p^* - q^*}}\\
					= & \frac{nf_ip^*+n(1-f_i)q^* - nq^*}{p^*-q^*}\\
					= & n \cdot \frac{f_i p^*+q^*-f_iq^* - q^*}{p^*-q^*}\\
					=  & n f_i
\end{align*}
\end{proof}
\begin{theorem}\label{thm:pure_var}
For a pure LDP protocol $\PE$ and $\mathsf{Support}$, the variance of the estimation $\tilde{c}(i)$ in \eqref{eq:absestimate} is:
\begin{equation}
\mathrm{Var}[\tilde{c}(i)] = \frac{nq^*(1-q^*)}{(p^*-q^*)^2}  + \frac{n f_i (1-p^*-q^*)}{p^*-q^*}\label{eq:absvar_exact}
\end{equation}
\end{theorem}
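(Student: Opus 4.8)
The plan is to exploit the independence of users' reports and reduce the computation to a sum of Bernoulli variances. Since \eqref{eq:absestimate} merely shifts and rescales the random quantity $\sum_j \ind_{\mathsf{Support}(y^j)}(i)$ by constants, I would first factor out the scaling to obtain
\[
\mathrm{Var}[\tilde{c}(i)] = \frac{1}{(p^*-q^*)^2}\,\mathrm{Var}\!\left[\sum_j \ind_{\mathsf{Support}(y^j)}(i)\right].
\]
Because each user runs $\PE$ on her own value exactly once (we are in the single-report setting), the indicator variables $X_j := \ind_{\mathsf{Support}(y^j)}(i)$ are mutually independent, so the variance of the sum equals $\sum_j \mathrm{Var}[X_j]$.

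Next I would classify the users by their true value. By the definition of a pure protocol, a user holding value $i$ lands in $i$'s support set with probability $p^*$, while a user holding any other value lands there with probability $q^*$; hence $\Pr{X_j = 1}$ equals $p^*$ for the $n f_i$ users with value $i$ and $q^*$ for the remaining $n(1-f_i)$ users. Each $X_j$ is a Bernoulli variable, so $\mathrm{Var}[X_j]$ is $p^*(1-p^*)$ in the first case and $q^*(1-q^*)$ in the second. Summing over both groups yields
\[
\mathrm{Var}[\tilde{c}(i)] = \frac{n f_i\, p^*(1-p^*) + n(1-f_i)\, q^*(1-q^*)}{(p^*-q^*)^2}.
\]

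The rest is purely algebraic: I would split this into the $f_i$-independent term $n q^*(1-q^*)$ plus the coefficient of $f_i$, namely $n\big(p^*(1-p^*) - q^*(1-q^*)\big)$. The key step is the factorization $p^*(1-p^*) - q^*(1-q^*) = (p^*-q^*)(1-p^*-q^*)$, which follows by writing the left side as $(p^*-q^*) - \big((p^*)^2-(q^*)^2\big)$ and pulling out $p^*-q^*$. Substituting this cancels one power of $(p^*-q^*)$ in the second term and produces exactly \eqref{eq:absvar_exact}. I expect no genuine obstacle: the only points demanding care are justifying the independence of the $X_j$ (which is why the single-report assumption matters) and recognizing the factorization that collapses the $f_i$-coefficient to the advertised form $(1-p^*-q^*)/(p^*-q^*)$.
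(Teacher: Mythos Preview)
Your proposal is correct and mirrors the paper's own proof essentially step for step: factor out the affine transformation, use independence to split the variance of the indicator sum into $n f_i\, p^*(1-p^*)+n(1-f_i)\, q^*(1-q^*)$, and then rearrange. The paper is in fact terser than you are about the final algebra; your explicit factorization $p^*(1-p^*)-q^*(1-q^*)=(p^*-q^*)(1-p^*-q^*)$ is exactly the step that bridges the paper's penultimate and final lines.
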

\begin{proof}
The random variable $\tilde{c}(i)$ is the (scaled) summation of $n$ independent random variables drawn from the Bernoulli distribution. More specifically, $nf_i$ (resp. $(1-f_i)n$) of these random variables are drawn from the Bernoulli distribution with parameter $p^*$ (resp. $q^*$).
Thus,
{\small
\begin{align}
\mathrm{Var}[\tilde{c}(i)] &=  \mathrm{Var}\left[\frac{ \left(\sum_j \ind_{\mathsf{Support}(y^j)} (i)\right) - nq^* }{p^* - q^*}\right]   \nonumber \\
&= \frac{\sum_j \mathrm{Var}[\ind_{\mathsf{Support}(y^j)} (i)]}{(p^* - q^*)^2}  \nonumber  \\
&= \frac{n f_i p^* (1-p^*)+n(1-f_i)q^* (1-q^*)}{(p^*-q^*)^2}  \nonumber \\
&= \frac{nq^* (1-q^*)}{(p^*-q^*)^2} + \frac{n f_i (1-p^*-q^*)}{p^*-q^*} 
\end{align}
}

\end{proof}

In many application domains, the vast majority of values appear very infrequently, and one wants to identify the more frequent ones.  The key to avoid having lots of false positives is to have low estimation variances for the infrequent values.  When $f_i$ is small, the variance in \eqref{eq:absvar_exact} is dominated by the first term.  We use $\mathrm{Var^*}$ to denote this approximation of the variance, that is:
\begin{equation}
\mathrm{Var^*}[\tilde{c}(i)] = \frac{nq^*(1-q^*)}{(p^*-q^*)^2}\label{eq:absvar}
\end{equation}
We also note that some protocols have the property that $p^*+q^*=1$, in which case $\mathrm{Var^*}=\mathrm{Var}$.  

As the estimation $\tilde{c}(i)$ is the sum of many independent random variables, its distribution is very close to a normal distribution.  Thus, the mean and variance of $\tilde{c}(i)$ fully characterizes the distribution of $\tilde{c}(i)$ for all practical purposes.
When comparing different methods, we observe that fixing $\epsilon$, the differences are reflected in the constants for the variance, which is where we focus our attention.

\section{Optimizing LDP Protocols}
\label{sec:protocols}

We now cast many protocols that have been proposed into our framework of ``pure'' LDP protocols.  Casting these protocols into the framework of pure protocols enables us to derive their variance and understand how each method's accuracy is affected by parameters such as domain size, $\epsilon$, etc.  This also enables us to generalize and optimize these protocols and propose two new protocols that improve upon existing ones.  More specifically, we will consider the following protocols, which we organize by their encoding methods.

\begin{itemize}
	\item \textbf{Direct Encoding (\DE).} There is no encoding. It is a generalization of the Random Response technique.
    \item \textbf{Histogram Encoding (HE).}  An input $v$ is encoded as a histogram for the $d$ possible values.  The perturbation step adds noise from the Laplace distribution to each number in the histogram.  We consider two aggregation techniques, \SHE and \THE.
     \begin{itemize}
    	\item \textbf{Summation with Histogram Encoding (\SHE)} simply sums up the reported noisy histograms from all users.
	    \item \textbf{Thresholding with Histogram Encoding (\THE)} is parameterized by a value $\theta$; it interprets each noisy count above a threshold $\theta$ as a $1$, and each count below $\theta$ as a $0$.
    \end{itemize}
    \item \textbf{Unary Encoding (UE).}  An input $v$ is encoded as a length-$d$ bit vector, with only the bit corresponding to $v$ set to $1$.  Here two key parameters in perturbation are $p$, the probability that $1$ remains $1$ after perturbation, and $q$, the probability that $0$ is perturbed into $1$.  Depending on their choices, we have two protocols, \SUE and \OUE.
    \begin{itemize}
	  \item \textbf{Symmetric Unary Encoding (\SUE)} uses $p+q=1$; this is the Basic RAPPOR protocol~\cite{rappor}.
	  \item \textbf{Optimized Unary Encoding (\OUE)} uses optimized choices of $p$ and $q$; this is newly proposed in this paper.
    \end{itemize}
    \item \textbf{Local Hashing (LH).}  An input $v$ is encoded by choosing at random $H$ from a universal hash function family $\mathbb{H}$, and then outputting $(H,H(v))$.  This is called Local Hashing because each user chooses independently the hash function to use.  Here a key parameter is the range of these hash functions. Depending on this range, we have two protocols, \BLH and \OLH.

    \begin{itemize}
	    \item \textbf{Binary Local Hashing (\BLH)} uses hash functions that outputs a single bit.  This is equivalent to the random matrix projection technique in~\cite{Bassily2015local}.
	   \item \textbf{Optimized Local Hashing (\OLH)} uses optimized choices for the range of hash functions; this is newly proposed in this paper.
    \end{itemize}
\end{itemize}

\subsection{Direct Encoding (DE)} \label{sec:direct}

One natural method is to extend the binary response method to the case where the number of input values is more than $2$.
This is used in~\cite{wang2016private}.  


\mypara{Encoding and Perturbing.} $\mathsf{Encode}_{\mathrm{DE}}(v)=v$, and $\mathsf{Perturb}$ is defined as follows.
\[
\Pr{\mathsf{Perturb}_{\mathrm{DE}}(x) = i}  = \left\{
\begin{array}{lr}
p=\frac{e^\epsilon}{e^\epsilon + d - 1}, & \mbox{if} \; i = x  \\
q= \frac{1-p}{d-1}=\frac{1}{e^\epsilon + d - 1}, & \mbox{if} \; i \ne x  \\
\end{array}
\right.
\]

\begin{theorem}[Privacy of DE]
	\label{thm:ran-ldp}
	The \protRandom{} (DE) Protocol satisfies $\epsilon$-LDP.
\end{theorem}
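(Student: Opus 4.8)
The plan is to verify the definition of $\epsilon$-LDP (Definition~\ref{def:dlp}) directly for $\PE = \mathsf{Perturb}_{\mathrm{DE}} \circ \mathsf{Encode}_{\mathrm{DE}}$. Since $\mathsf{Encode}_{\mathrm{DE}}$ is the identity, I have $\PE(v) = \mathsf{Perturb}_{\mathrm{DE}}(v)$, and the range of $\PE$ is exactly $[d]$. The key structural observation is that for any fixed output $y \in [d]$ and any input $x$, the probability $\Pr{\PE(x) = y}$ takes only one of two values: it equals $p = \frac{e^\epsilon}{e^\epsilon + d - 1}$ when $y = x$, and equals $q = \frac{1}{e^\epsilon + d - 1}$ when $y \ne x$.

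The heart of the argument is to bound the ratio $\frac{\Pr{\PE(v_1) = y}}{\Pr{\PE(v_2) = y}}$ uniformly over all inputs $v_1, v_2$ and all outputs $y$. Since $e^\epsilon \ge 1$, I have $p \ge q > 0$, so the numerator is at most $p$ and the denominator is at least $q$. Hence the worst case (largest ratio) is attained when $y = v_1$ (making the numerator $p$) while simultaneously $y \ne v_2$ (making the denominator $q$); all other combinations of $(y, v_1, v_2)$ give a ratio of $p/p = 1$, $q/q = 1$, or $q/p \le 1$, all of which are at most $p/q$. It then remains only to compute
\begin{equation*}
\frac{p}{q} = \frac{e^\epsilon/(e^\epsilon + d - 1)}{1/(e^\epsilon + d - 1)} = e^\epsilon,
\end{equation*}
which establishes $\Pr{\PE(v_1) = y} \le e^\epsilon \Pr{\PE(v_2) = y}$ for every choice of $v_1, v_2, y$.

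There is essentially no hard step here: the proof is a short case analysis over the finite set of possible probability values, followed by a one-line algebraic simplification confirming that $p$ and $q$ were chosen precisely so that their ratio is $e^\epsilon$. The only thing to be careful about is to argue over \emph{all} output values $y$ rather than just $y = v_1$, and to confirm that the denominator never vanishes (which holds since $q > 0$ for any finite $\epsilon$ and $d \ge 1$). Because the maximizing case uses $y \in \mathsf{Support}$-style reasoning only implicitly, I would state the two-value observation explicitly to make the supremum over $y$ transparent.
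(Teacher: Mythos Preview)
Your proposal is correct and follows exactly the same approach as the paper: bound the likelihood ratio $\Pr{\PE(v_1)=y}/\Pr{\PE(v_2)=y}$ by $p/q$ and verify that $p/q = e^\epsilon$. Your version is in fact more explicit about the case analysis (the paper simply writes the inequality $\leq p/q$ without spelling out why), but the argument is identical.
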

\begin{proof}
	For any inputs $v_1,v_2$ and output $y$, we have:
	\begin{align*}
	\frac{\Pr{\PE_{\mathrm{DE}}(v_1)=y}}{\Pr{\PE_{\mathrm{DE}}(v_2)=y}}
	\leq \frac{p}{q}
	=\frac{e^\epsilon/(e^\epsilon + d - 1)}{1/(e^\epsilon + d - 1)}
	=e^\epsilon
	\end{align*}
\end{proof}

\mypara{Aggregation.}
Let the $\mathsf{Support}$ function for $\mathrm{DE}$ be $\mathsf{Support}_{\mathrm{DE}}(i)=\{i\}$, i.e., each output value $i$ supports the input $i$.  
Then this protocol is pure, with $p^*=p$ and $q^*=q$.
Plugging these values into \eqref{eq:absvar}, we have
\begin{align*}
\mathrm{Var^*}[\tilde{c}_{\mathsf{DE}}(i)] = n\cdot \frac{d-2+e^\epsilon}{(e^\epsilon -1)^2}
\end{align*}
Note that the variance given above is linear in $nd$.  As $d$ increases, the accuracy of \DE suffers.  This is because, as $d$ increases, $p=\frac{e^\epsilon}{e^\epsilon + d - 1}$, the probability that a value is transmitted correctly, becomes smaller.  For example, when $e^\epsilon=49$ and $d=2^{16}$, we have $p=\frac{49}{65584}\approx 0.00075$.

\subsection{Histogram Encoding (HE)}
In Histogram Encoding (\HE), an input $x\in [d]$ is encoded using a length-$d$ histogram.  

	

\mypara{Encoding.}  $\mathsf{Encode}_{\mathrm{HE}}(v)=[0.0,0.0,\cdots,1.0,\cdots,0.0]$, where only the $v$-th component is $1.0$.  Two different input $v$ values will result in two vectors that have $\mathsf{L1}$ distance of $2.0$.

\mypara{Perturbing.}
$\mathsf{Perturb}_{\mathrm{HE}}(B)$ outputs $B'$ such that $B'[i] = B[i] + \Lap{\frac{2}{\epsilon}}$,
where $\Lap{\beta}$ is the Laplace distribution where $\Pr{\Lap{\beta}=x} = \frac{1}{2\beta} \myexp{-|x|/\beta}$.

\begin{theorem}[Privacy of HE]
	\label{thm:hist-ldp}
	The \protHistogram{} protocol satisfies $\epsilon$-LDP.
\end{theorem}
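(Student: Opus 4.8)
The plan is to recognize $\mathsf{Perturb}_{\mathrm{HE}}$ as an instance of the classical Laplace mechanism and to verify the bound by a direct density-ratio computation, with the noise scale $2/\epsilon$ chosen to match the $\mathsf{L1}$-sensitivity of the encoding. First I would note that for any two inputs $v_1,v_2 \in [d]$ the encoded vectors $E_1 := \mathsf{Encode}_{\mathrm{HE}}(v_1)$ and $E_2 := \mathsf{Encode}_{\mathrm{HE}}(v_2)$ are deterministic, and (as observed in the encoding description) have $\mathsf{L1}$ distance exactly $2$, i.e. $\sum_{i=1}^d |E_1[i]-E_2[i]| = 2$.

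Next, because each of the $d$ coordinates is perturbed by \emph{independent} $\Lap{2/\epsilon}$ noise, the density of the reported vector factorizes over coordinates. For any candidate output $B'$ I would write the likelihood ratio as
\begin{align*}
\frac{\Pr{\PE_{\mathrm{HE}}(v_1)=B'}}{\Pr{\PE_{\mathrm{HE}}(v_2)=B'}}
&= \prod_{i=1}^d \frac{e^{-\frac{\epsilon}{2}|B'[i]-E_1[i]|}}{e^{-\frac{\epsilon}{2}|B'[i]-E_2[i]|}} \\
&= \exp\!\left(\frac{\epsilon}{2}\sum_{i=1}^d \Big(|B'[i]-E_2[i]| - |B'[i]-E_1[i]|\Big)\right),
\end{align*}
where the common normalizing factor $\tfrac{1}{2\beta}=\tfrac{\epsilon}{4}$ of the Laplace density cancels in every coordinate.

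Then I would apply the (reverse) triangle inequality coordinatewise, $|B'[i]-E_2[i]| - |B'[i]-E_1[i]| \le |E_1[i]-E_2[i]|$, so that the exponent is bounded by $\frac{\epsilon}{2}\sum_{i=1}^d |E_1[i]-E_2[i]| = \frac{\epsilon}{2}\cdot 2 = \epsilon$. Hence the ratio is at most $e^\epsilon$ for every output $B'$, which is exactly the condition of Definition~\ref{def:dlp}, establishing $\epsilon$-LDP.

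The argument is essentially routine, so there is no substantive obstacle beyond the bookkeeping of matching the scale to the sensitivity: the crucial point is that two distinct histogram encodings always differ in two coordinates, giving $\mathsf{L1}$-sensitivity $2$ rather than $1$. This is precisely why the noise scale must be $2/\epsilon$; scale $1/\epsilon$ would only yield $2\epsilon$-LDP.
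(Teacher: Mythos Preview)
Your proof is correct and follows essentially the same approach as the paper: factorize the likelihood ratio over coordinates via independence of the Laplace noise, then bound it using the fact that the two encodings have $\mathsf{L1}$ distance $2$. The paper's version is just a slight shortcut of yours---rather than invoking the triangle inequality on all $d$ coordinates, it observes directly that only the coordinates $v_1$ and $v_2$ contribute nontrivially and bounds each of those two ratios by $e^{\epsilon/2}$.
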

\begin{proof}
	For any inputs $v_1,v_2$, and output $B$, we have
$$
	\begin{array}{rll}
\frac{\Pr{B|v_1}}{\Pr{B|v_2}}
	&=\frac{\prod_{i \in [d]}\Pr{B[i]|v_1}}{\prod_{i \in [d]}\Pr{B[i]|v_2}}
	&= \frac{\Pr{B[v_1]|v_1}\Pr{B[v_2]|v_1}}{\Pr{B[v_1]|v_2}\Pr{B[v_2]|v_2}}
\\
	& \leq e^{\epsilon/2} \cdot e^{\epsilon/2} & =e^\epsilon
	\end{array}
$$
\end{proof}

\mypara{Aggregation: Summation with Histogram Encoding (SHE)} works as follows: For each value, sum the noisy counts for that value reported by all users.
That is, $\tilde{c}_{\mathrm{SHE}}(i) = \sum_j B^j[i]$, where $B^j$ is the noisy histogram received from user $j$.
This aggregation method does not provide a $\mathsf{Support}$ function and is not pure.  We prove its property as follows.
\begin{theorem}
	\label{thm:hist}
	In \SHE, the estimation $\tilde{c}_{\mathrm{SHE}}$ is unbiased.  Furthermore, the variance is
\begin{align*}
\Var{\tilde{c}_{\mathrm{SHE}}(i)} = n \frac{8}{\epsilon^2}
\end{align*}
\end{theorem}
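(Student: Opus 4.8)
The plan is to write the estimator as a deterministic signal plus independent, zero-mean noise, and then treat expectation and variance separately using linearity and the additivity of variance over independent summands. First I would observe that for user $j$ holding value $v^j$, the encoded histogram $B^j_0$ is deterministic with $B^j_0[i] = \ind_{\{v^j = i\}}$, and the reported histogram is $B^j[i] = B^j_0[i] + L^j_i$, where each $L^j_i \sim \Lap{2/\epsilon}$ is drawn independently across users. Consequently $\tilde{c}_{\mathrm{SHE}}(i) = \sum_j B^j[i] = \left(\sum_j \ind_{\{v^j=i\}}\right) + \sum_j L^j_i$, i.e. the true count of $i$ plus accumulated Laplace noise.

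For unbiasedness I would use that the Laplace distribution has mean zero, so $\EV{L^j_i} = 0$, and by linearity of expectation $\EV{\tilde{c}_{\mathrm{SHE}}(i)} = \sum_j B^j_0[i] = n f_i$, the true count. For the variance, the deterministic part contributes nothing, and since the $n$ noise terms are mutually independent, the variance of the sum equals the sum of the variances, giving $\Var{\tilde{c}_{\mathrm{SHE}}(i)} = \sum_j \Var{L^j_i} = n\,\Var{\Lap{2/\epsilon}}$.

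The only genuine computation — and hence the main (if minor) obstacle — is evaluating the variance of a single Laplace variable. One derives that a variable with density $\frac{1}{2\beta}\myexp{-|x|/\beta}$ has variance $2\beta^2$ by computing $\int_{-\infty}^{\infty} x^2 \cdot \frac{1}{2\beta}\myexp{-|x|/\beta}\,dx$ (the first moment vanishes by symmetry). Substituting $\beta = 2/\epsilon$ yields $\Var{\Lap{2/\epsilon}} = 2(2/\epsilon)^2 = 8/\epsilon^2$, so that $\Var{\tilde{c}_{\mathrm{SHE}}(i)} = 8n/\epsilon^2$ as claimed.

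I would close with the observation that this aggregation sidesteps the pure-protocol framework entirely: there is no $\mathsf{Support}$ function, so Theorems~\ref{thm:pure_unbiased} and~\ref{thm:pure_var} do not apply directly, and the argument instead rests only on the mean and variance of the Laplace mechanism. A notable consequence worth flagging is that, unlike the \DE variance derived earlier, the result depends neither on the domain size $d$ nor on the frequency $f_i$, only on $n$ and $\epsilon$.
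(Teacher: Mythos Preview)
Your proposal is correct and follows essentially the same approach as the paper: decompose the estimator into the true count plus independent zero-mean Laplace noise, conclude unbiasedness from the zero mean, and obtain the variance by summing $n$ copies of $\Var{\Lap{2/\epsilon}}=2(2/\epsilon)^2=8/\epsilon^2$. Your write-up is in fact cleaner than the paper's, which contains a pair of cancelling typos (it states the Laplace variance as $2/\beta^2$ with $\beta=\epsilon/2$ rather than $2\beta^2$ with $\beta=2/\epsilon$); your closing remarks about the pure-protocol framework and independence from $d$ and $f_i$ are accurate but go beyond what the paper includes.
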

\begin{proof}
Since the added noise is $0$-mean; the expected value of the sum of all noisy counts is the true count.

The $\Lap{\beta}$ distribution has variance $\frac{2}{\beta^2}$, since $\beta=\frac{\epsilon}{2}$ for each $B^j[i]$, then the variance of each such variable is $\frac{8}{\epsilon^2}$, and the sum of $n$ such independent variables have variance $n \frac{8}{\epsilon^2}$.
\end{proof}

\mypara{Aggregation: Thresholding with Histogram Encoding (THE)} interprets a vector of noisy counts discretely by defining
$$\mathsf{Support}_{\mathrm{THE}}(B) = \{ v \mid B[v] > \theta \}$$
That is, each noise count that is $>\theta$ supports the corresponding value.
This thresholding step can be performed either by the user or by the aggregator.  It does not access the original value, and thus does not affect the privacy guarantee.
Using thresholding to provide a $\mathsf{Support}$ function makes the protocol pure.  The probability $p^*$ and $q^*$ are given by
\begin{align*}
p^* = 1 - F(\theta-1);\;\;  q^* = 1 - F(\theta),
\\
\mbox{where }F(x) = \left\{
\begin{array}{lr}
\frac{1}{2}e^{\frac{\epsilon}{2}x}, & \mbox{if} \; x < 0  \\
1 - \frac{1}{2}e^{-\frac{\epsilon}{2}x}, & \mbox{if} \; x \geq 0  \\
\end{array}
\right.
\end{align*}
Here, $F(\cdot)$ is the cumulative distribution function of Laplace distribution.
If $0\leq \theta\leq 1$, then we have
\[
p^* = 1 - \frac{1}{2}e^{\frac{\epsilon}{2}(\theta-1)};\;\; q^* = \frac{1}{2}e^{-\frac{\epsilon}{2}\theta}.
\]
Plugging these values into \eqref{eq:absvar}, we have
\begin{align*}
\mathrm{Var^*}[\tilde{c}_{\mathrm{HET}}(i)] = n\cdot \frac{2e^{\epsilon \theta/2}-1}{(1+e^{\epsilon(\theta-1/2)}-2e^{\epsilon \theta/2})^2} \label{eq:ohe:var}
\end{align*}

\mypara{Comparing $\mathrm{SHE}$ and $\mathrm{THE}$.}
Fixing $\epsilon$, one can choose a $\theta$ value to minimize the variance.  Numerical analysis shows that the optimal $\theta$ is in $(\frac{1}{2}, 1)$, and depends on $\epsilon$. When $\epsilon$ is large, $\theta\rightarrow 1$. Furthermore, $\mathrm{Var}[\tilde{c}_{\mathsf{THE}}]< \mathrm{Var}[\tilde{c}_{\mathsf{SHE}}]$ is always true.  This means that by thresholding, one improves upon directly summing up noisy counts, likely because thresholding limits the impact of noises of large magnitude.  In Section~\ref{sec:compare}, we illustrate the differences between them using actual numbers.

\subsection{Unary Encoding (UE)}

Basic RAPPOR, which we described in Section~\ref{sec:basic_rappor}, takes the approach of directly perturb a bit vector.  We now explore this method further.

	

\mypara{Encoding.}   $\mathsf{Encode}(v)=[0,\cdots,0,1,0,\cdots,0]$, a length-$d$ binary vector where only the $v$-th position is $1$.

\mypara{Perturbing.}
$\mathsf{Perturb}(B)$ outputs $B'$ as follows:
 \[
\Pr{B'[i] = 1}  = \left\{
\begin{array}{lr}
p, & \mbox{if} \; B[i] = 1  \\
q, & \mbox{if} \; B[i] = 0  \\
\end{array}
\right.
\]

\begin{theorem}[Privacy of UE]
	\label{thm:una-ldp}
	The \protUnary{} protocol satisfies $\epsilon$-LDP for
\begin{align}
 \epsilon= \ln \left(\frac{p(1-q)}{(1-p)q}\right)
  \label{eq:uereq}
\end{align}

\end{theorem}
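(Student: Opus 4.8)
The plan is to mirror the structure used in the proof of Theorem~\ref{thm:hist-ldp} (Privacy of HE), exploiting the fact that $\mathsf{Perturb}$ treats each of the $d$ bits independently. First I would write, for any input $v$, the probability of producing a fixed output vector $B$ as a product over coordinates,
$$\Pr{\PE(v)=B} = \prod_{i\in[d]} \Pr{B[i]\,|\,v},$$
where each coordinate factor equals $p$ or $1-p$ when $\mathsf{Encode}(v)[i]=1$ (according to whether $B[i]=1$ or $0$), and equals $q$ or $1-q$ when $\mathsf{Encode}(v)[i]=0$.

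The key observation is that the unary encodings of two distinct inputs $v_1\ne v_2$ agree in every coordinate except $v_1$ and $v_2$: at coordinate $v_1$ the encoded bit is $1$ under $v_1$ but $0$ under $v_2$, and symmetrically at coordinate $v_2$. Consequently, in the likelihood ratio every factor for $i\ne v_1,v_2$ is identical in numerator and denominator and cancels, leaving only the two differing coordinates:
$$\frac{\Pr{\PE(v_1)=B}}{\Pr{\PE(v_2)=B}} = \frac{\Pr{B[v_1]|v_1}}{\Pr{B[v_1]|v_2}}\cdot\frac{\Pr{B[v_2]|v_1}}{\Pr{B[v_2]|v_2}}.$$
I would then bound each surviving factor by maximizing over the possible output bit. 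The first factor is $p/q$ when $B[v_1]=1$ and $(1-p)/(1-q)$ when $B[v_1]=0$; the second is $q/p$ when $B[v_2]=1$ and $(1-q)/(1-p)$ when $B[v_2]=0$. Since $p>q$, the worst-case output is $B[v_1]=1,\;B[v_2]=0$, for which the ratio attains
$$\frac{p}{q}\cdot\frac{1-q}{1-p} = \frac{p(1-q)}{(1-p)q} = e^\epsilon,$$
matching \eqref{eq:uereq} and establishing $\epsilon$-LDP.

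The one step I would flag as needing care — though it is conceptually simple — is the monotonicity argument that pins down the maximizing output: one must justify that $p>q$ so that an output bit of $1$ at coordinate $v_1$ (and $0$ at coordinate $v_2$) is what inflates the ratio, and observe that the two coordinates may be optimized independently precisely because they contribute multiplicatively. Everything else is routine bookkeeping once the product factorization and the two-coordinate cancellation are in place.
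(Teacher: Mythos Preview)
Your proposal is correct and follows essentially the same approach as the paper: factor the output probability over coordinates by independence, cancel all coordinates except $v_1$ and $v_2$ since the unary encodings agree elsewhere, and then argue that the worst-case output $B[v_1]=1,\;B[v_2]=0$ yields the ratio $\frac{p}{q}\cdot\frac{1-q}{1-p}=e^\epsilon$. Your write-up is in fact slightly more explicit than the paper's about why that output maximizes the ratio (the $p>q$ monotonicity step), but the argument is the same.
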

\begin{proof}
	For any inputs $v_1,v_2$, and output $B$, we have
	\begin{align}
	\frac{\Pr{B|v_1}}{\Pr{B|v_2}}
	=&\frac{\prod_{i \in [d]}\Pr{B[i]|v_1}}{\prod_{i \in [d]}\Pr{B[i]|v_2}}\label{aln:una-ind}\\
	\leq &\frac{\Pr{B[v_1]=1|v_1}\Pr{B[v_2]=0|v_1}}{\Pr{B[v_1]=1|v_2}\Pr{B[v_2]=0|v_2}}\label{aln:una-diff}\\
	=&\frac{p}{q}\cdot\frac{1-q}{1-p}=e^\epsilon\nonumber
	\end{align}
	\eqref{aln:una-ind} is because each bit is flipped independently,
	and \eqref{aln:una-diff} is because $v_1$ and $v_2$ result in bit vectors that differ only in locations $v_1$ and $v_2$, and a vector with position $v_1$ being $1$ and position $v_2$ being $0$ maximizes the ratio.
\end{proof}

\mypara{Aggregation.}
A reported bit vector is viewed as supporting an input $i$ if  $B[i]=1$, i.e.,
$\mathsf{Support}_{\mathrm{UE}}(B)=\{i \mid B[i]=1\}$.
This yields $p^*=p$ and $q^*=q$.
Interestingly, \eqref{eq:uereq} does not fully determine the values of $p$ and $q$ for a fixed $\epsilon$.  Plugging \eqref{eq:uereq} into 
\eqref{eq:absvar}, we have
\begin{align}
\mathrm{Var^*}[\tilde{c}_{\UE}(i)] &=\frac{nq(1-q)}{(p-q)^2}\nonumber
= \frac{nq(1-q)}{(\frac{e^\epsilon q}{1-q+e^\epsilon q}-q)^2}\nonumber\\
&= n\cdot\frac{((e^\epsilon-1)q+1)^2}{(e^\epsilon-1)^2(1-q)q}. \label{eq:uevar}
\end{align}

\mypara{Symmetric UE ($\SUE$).}
RAPPOR's implementation chooses $p$ and $q$ such that $p+q=1$; making the treatment of $1$ and $0$ symmetric.  Combining this with \eqref{eq:uereq}, we have
$$p=\frac{e^{\epsilon / 2}}{e^{\epsilon / 2} + 1},\;\;  q=\frac{1}{e^{\epsilon / 2} + 1}  $$
Plugging these into \eqref{eq:uevar}, we have
\begin{align*}
\mathrm{Var^*}[\tilde{c}_{\SUE}(i)] & = n\cdot\frac{e^{\epsilon/2}}{(e^{\epsilon/2}-1)^2}
\end{align*}

\mypara{Optimized UE ($\OUE$).}  Instead of making $p$ and $q$ symmetric, we can choose them to minimize \eqref{eq:uevar}.
Take the partial derivative of \eqref{eq:uevar} with respect to $q$, and solving $q$ to make the result $0$, we get:
{\small
\begin{align*}
\frac{\partial\left[\frac{((e^\epsilon-1)q+1)^2}{(e^\epsilon-1)^2(1-q)q}\right]}{\partial q}
=&\frac{\partial\left[\frac{1}{(e^\epsilon-1)^2}\cdot\left(\frac{(e^\epsilon-1)^2q}{1-q}+\frac{2(e^\epsilon-1)}{1-q}+\frac{1}{q(1-q)}\right)\right]}{\partial q}\\
=&\frac{\partial\left[\frac{1}{(e^\epsilon-1)^2}\cdot\left(-(e^\epsilon-1)^2+\frac{e^{2\epsilon}}{1-q}+\frac{1}{q}\right)\right]}{\partial q}\\
=&\frac{1}{(e^\epsilon-1)^2} \left(\frac{e^{2\epsilon}}{(1-q)^2}-\frac{1}{q^2}\right)=0\\
\implies&\frac{1-q}{q}=e^\epsilon, \mbox{i.e.}, q=\frac{1}{e^\epsilon+1} \mbox{ and } p=\frac{1}{2}
\end{align*}
}
Plugging $p=\frac{1}{2}$ and $q=\frac{1}{e^\epsilon+1}$ into \eqref{eq:uevar}, we get
\begin{align}
\mathrm{Var^*}[\tilde{c}_{\OUE}(i)] & = n \frac{4e^\epsilon}{(e^\epsilon-1)^2}  \label{eq:var:oue}
\end{align}

The reason why setting $p=\frac{1}{2}$ and $q=\frac{1}{e^\epsilon+1}$ is optimal when the true frequencies are small may be unclear at first glance; however, there is an intuition behind it.  When the true frequencies are small, $d$ is large.
Recall that $e^\epsilon= \frac{p}{1-p}\frac{1-q}{q}$.  Setting $p$ and $q$ can be viewed as splitting $\epsilon$ into $\epsilon_1+\epsilon_2$ such that $\frac{p}{1-p}=e^{\epsilon_1}$ and $\frac{1-q}{q}=e^{\epsilon_2}$.  That is, $\epsilon_1$ is the privacy budget for transmitting the $1$ bit, and $\epsilon_2$ is the privacy budget for transmitting each $0$ bit.  Since there are many $0$ bits and a single $1$ bit, it is better to allocate as much privacy budget for transmitting the $0$ bits as possible.  In the extreme, setting $\epsilon_1=0$ and $\epsilon_2=\epsilon$ means that setting $p=\frac{1}{2}$.

\subsection{Binary Local Hashing (BLH)}\label{sec:prot:lbh}
Both \HE and \UE use unary encoding and have $\Theta(d)$ communication cost, which is too large for some applications.
To reduce the communication cost, a natural idea is to first hash the input value into a domain of size $k<d$, and then apply the \UE method to the hashed value.  This is the basic idea underlying the \rappor{} method.  However, a problem with this approach is that two values may be hashed to the same output, making them indistinguishable from each other during decoding.  RAPPOR tries to address this in several ways.  One is to use more than one hash functions; this reduces the chance of a collision.  The other is to use cohorts, so that different cohorts use different sets of hash functions.  These remedies, however, do not fully eliminate the potential effect of collisions.  Using more than one hash functions also means that every individual bit needs to be perturbed more to satisfy $\epsilon$-LDP for the same $\epsilon$.

A better approach is to make each user belongs to a cohort by herself.  We call this the \textbf{local hashing} approach.
The random matrix-base protocol in~\cite{Bassily2015local} (described in Section~\ref{sec:over:matrix}), in its very essence, uses a local hashing encoding that maps an input value to a single bit, which is then transmitted using randomized response.
Below is the Binary Local Hashing (BLH) protocol, which is logically equivalent to the one in Section~\ref{sec:over:matrix}, but is simpler and, we hope, better illustrates the essence of the idea.  

Let $\mathbb{H}$ be a universal hash function family, such that each hash function $H\in \mathbb{H}$ hashes an input in $[d]$ into one bit.  The universal property requires that $$\forall x,y\in[d], x\ne y: \underset{H \in \mathbb{H}}{\mathsf{Pr}} [H(x)=H(y)] \leq \frac{1}{2}.$$


	


\mypara{Encoding.}  $\mathsf{Encode}_{\mathrm{BLH}}(v)= \langle H,b\rangle$, where $H\leftarrow_{R} \mathbb{H}$ is chosen uniformly at random from $\mathbb{H}$, and
$b=H(v)$.  Note that the hash function $H$ can be encoded using an index for the family $\mathbb{H}$ and takes only $O(\log d)$ bits.

\mypara{Perturbing.} $\mathsf{Perturb}_{\mathrm{BLH}}(\langle H,b\rangle)= \langle H,b'\rangle$ such that
\[
\Pr{b' = 1}  = \left\{
\begin{array}{lr}
p=\frac{e^\epsilon}{e^\epsilon + 1}, & \mbox{if} \; b = 1  \\
q= \frac{1}{e^\epsilon + 1}, & \mbox{if} \; b = 0 \\
\end{array}
\right.
\]

\mypara{Aggregation.}
$\mathsf{Support}_{\BLH}(\langle H, b\rangle)=\{v\mid H(v)=b\}$, that is, each reported $\langle H,b \rangle$ supports all values that are hashed by $H$ to $b$, which are half of the input values.  Using this $\mathsf{Support}$ function makes the protocol pure, with $p^*=p$ and $q^*=\frac{1}{2}p + \frac{1}{2}q = \frac{1}{2}$.
Plugging the values of $p^*$ and $q^*$ into \eqref{eq:absvar}, we have $$
\mathrm{Var^*}[\tilde{c}_{\mathsf{BLH}}(i)] = n\cdot\frac{(e^\epsilon+1)^2}{(e^\epsilon-1)^2}. \label{eq:blh:var}
$$

\subsection{Optimal Local Hashing (OLH)} \label{sec:partition}
Once the random matrix projection protocol is cast as binary local hashing, we can clearly see that the encoding step loses information because the output is just one bit.  Even if that bit is transmitted correctly, we can get only one bit of information about the input, i.e., to which half of the input domain does the value belong.  
When $\epsilon$ is large, the amount of information loss in the encoding step dominates that of the random response step.  Based on this insight, we generalize Binary Local Hashing so that each input value is hashed into a value in $[g]$, where $g\geq 2$.  A larger $g$ value means that more information is being preserved in the encoding step.  This is done, however, at a cost of more information loss in the random response step.  As in our analysis of the Direct Encoding method, a large domain results in more information loss.   

Let $\mathbb{H}$ be a universal hash function family such that each $H\in \mathbb{H}$ outputs a value in $[g]$.

\mypara{Encoding.}  $\mathsf{Encode}(v)=\langle H, x\rangle$, where $H \in \mathbb{H}$ is chosen uniformly at random, and $x=H(v)$.

\mypara{Perturbing.}
$\mathsf{Perturb}(\langle H,x\rangle)=(\langle H,y\rangle)$, where
\[
\forall_{i \in [g]}\;\Pr{ y = i}  = \left\{
\begin{array}{lr}
p=\frac{e^\epsilon}{e^\epsilon + g - 1}, & \mbox{if} \; x = i  \\
q= \frac{1}{e^\epsilon + g - 1}, & \mbox{if} \; x \neq i \\
\end{array}
\right.
\]

\begin{theorem}[Privacy of \LH]
	The Local Hashing (\LH) Protocol satisfies $\epsilon$-LDP
\end{theorem}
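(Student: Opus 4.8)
The plan is to bound the likelihood ratio $\frac{\Pr{\PE_{\mathrm{LH}}(v_1) = \langle H, y\rangle}}{\Pr{\PE_{\mathrm{LH}}(v_2) = \langle H, y\rangle}}$ for arbitrary inputs $v_1, v_2$ and an arbitrary output $\langle H, y\rangle$, mirroring the proof of Theorem~\ref{thm:ran-ldp} for Direct Encoding. The key structural observation is that the hash function $H$ is drawn uniformly from $\mathbb{H}$ \emph{independently of the input value}, so the probability of selecting any particular $H$ equals $1/|\mathbb{H}|$ regardless of whether the input is $v_1$ or $v_2$. This means the hash-selection probability will cancel in the ratio, and only the perturbation step conditioned on $H$ contributes.

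First I would factor each output probability as $\Pr{\PE_{\mathrm{LH}}(v) = \langle H, y\rangle} = \frac{1}{|\mathbb{H}|}\cdot \Pr{y \mid H, v}$, where the conditional perturbation probability is $p$ when $y = H(v)$ and $q$ when $y \neq H(v)$. Taking the ratio for $v_1$ against $v_2$, the common factor $1/|\mathbb{H}|$ cancels, leaving $\frac{\Pr{y\mid H, v_1}}{\Pr{y\mid H, v_2}}$. Since each conditional probability is either $p$ or $q$ with $p > q$, the ratio is maximized when the numerator is $p$ and the denominator is $q$, giving $\frac{p}{q} = \frac{e^\epsilon/(e^\epsilon+g-1)}{1/(e^\epsilon+g-1)} = e^\epsilon$.

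The only mild subtlety worth flagging is that the universality of $\mathbb{H}$ plays no role in this privacy argument; it will matter later for the unbiasedness and variance analysis, but here privacy follows purely because the hash choice is data-independent and, conditioned on a fixed $H$, the mechanism is exactly the Direct Encoding randomized response over the range $[g]$, whose $\epsilon$-LDP guarantee is already established. Consequently, no step poses a genuine obstacle; the whole argument reduces to applying the DE analysis after conditioning on the hash function and observing that the data-independent choice of $H$ drops out of the ratio.
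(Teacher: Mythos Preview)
Your proposal is correct and follows essentially the same approach as the paper: both arguments observe that the hash function is chosen independently of the input so its selection probability cancels in the ratio, leaving only the perturbation ratio bounded by $p/q = e^\epsilon$. Your write-up is simply more explicit about the factoring and the role (or non-role) of universality, whereas the paper compresses this into a single displayed inequality.
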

\begin{proof}
	For any two possible input values $v_1,v_2$ and any output $\langle H,y\rangle $, we have,
	\begin{align*}
	\frac{\Pr{\langle H,y\rangle |v_1}}{\Pr{\langle H,y\rangle |v_2}} = \frac{\Pr{\mathsf{Perturb}(H(v_1))=y}}{\Pr{\mathsf{Perturb}(H(v_2))=y}}
	\leq \frac{p}{q}
	=e^\epsilon
	\end{align*}
\end{proof}

\mypara{Aggregation.} Let $\mathsf{Support}_{\LH}(\langle H, y\rangle )=\{i\mid H(i)=y\}$, i.e., the set of values that are hashed into the reported value.  This gives rise to a pure protocol with
\begin{align*}
p^*=p \mbox{ and }q^*=\frac{1}{g}p + \frac{g-1}{g}q = \frac{1}{g}.
\end{align*}
Plugging these values into \eqref{eq:absvar}, we have the
\begin{equation}
\mathrm{Var^*}[\tilde{c}_{\mathsf{LP}}(i)] = n\cdot\frac{(e^\epsilon-1+g)^2}{(e^\epsilon-1)^2(g-1)}. \label{eq:lp:var}
\end{equation}

\mypara{Optimized \LH {} (\OLH)}
\newcommand{\xa}{e^\epsilon}
\newcommand{\xaa}{e^{2\epsilon}}
\newcommand{\xb}{e^\epsilon-1}
\newcommand{\xbb}{(e^\epsilon-1)^2}
Now we find the optimal $g$ value, by taking the partial derivative of \eqref{eq:lp:var} with respect to $g$.
\begin{align*}
\frac{\partial \left[\frac{(e^\epsilon-1+g)^2}{(e^\epsilon-1)^2(g-1)}\right]}{\partial g}
&=\frac{\partial \left[\frac{g-1}{(e^\epsilon-1)^2}+\frac{1}{g-1}\cdot\frac{e^{2\epsilon}}{(e^\epsilon-1)^2}+\frac{2e^\epsilon}{(e^\epsilon-1)^2}\right]}{\partial g}\\
&=\frac{1}{(e^\epsilon-1)^2} -\frac{1}{(g-1)^2}\cdot\frac{e^{2\epsilon}}{(e^\epsilon-1)^2}=0\\
&\implies g=e^\epsilon+1
\end{align*}
When $g=e^\epsilon+1$, we have $p^*=\frac{e^\epsilon}{e^\epsilon + g - 1}=\frac{1}{2}$, $q^*= \frac{1}{g}=\frac{1}{e^\epsilon+1}$ into \eqref{eq:uevar}, and
\begin{equation}
\mathrm{Var^*}[\tilde{c}_{\mathsf{OLH}}(i)] = n\cdot\frac{4e^\epsilon}{(e^\epsilon-1)^2}. \label{eq:olhvar}
\end{equation}

\mypara{Comparing \OLH with \OUE.}  It is interesting to observe that the variance we derived for optimized local hashing (\OLH), i.e., \eqref{eq:olhvar} is exactly that we have for optimized unary encoding (\OUE), i.e., \eqref{eq:var:oue}.  Furthermore, the probability values $p^*$ and $q^*$ are also exactly the same.  This illustrates that \OLH and \OUE are in fact deeply connected.  \OLH can be viewed as a compact way of implementing \OUE.  
Compared with \OUE, \OLH has communication cost $O(\log n)$ instead of $O(d)$.

The fact that optimizing two apparently different encoding approaches, namely, unary encoding and local hashing, results in conceptually equivalent protocol, seems to suggest that this may be optimal (at least when $d$ is large).  However, whether this is the best possible protocol remains an interesting open question.

\section{Which Protocol to Use}
\label{sec:compare}
\begin{table*}[t]
	\centering
	\begin{tabular}{|c|c|c|c|c|c|c|c|}
		\hline
							& \DE & \SHE & \THE ($\theta=1$) & \SUE & \OUE    & \BLH & \OLH  \\ \hline
		Communication Cost  & $O(\log d)$   & $O(d)$ & $O(d)$& $O(d)$& $O(d)$  	  & $O(\log n)$     & $O(\log n)$ \\ \hline
        $\mathrm{Var}[\tilde{c}(i)]/n$
                            &$\frac{d-2+e^\epsilon}{(e^\epsilon -1)^2}$
                            &$\frac{8}{\epsilon^2}$
                            &$\frac{2e^{\epsilon /2}-1}{(e^{\epsilon/2}-1)^2}$
                            &$\frac{e^{\epsilon/2}}{(e^{\epsilon/2}-1)^2}$
                            &$\frac{4e^\epsilon}{(e^\epsilon-1)^2}$
                            &$\frac{(e^\epsilon+1)^2}{(e^\epsilon-1)^2}$
                            &$\frac{4e^\epsilon}{(e^\epsilon-1)^2}$\\\hline

	\end{tabular}
	\vspace{0.1cm}
	\caption{
		Comparison of Communication Cost, Computation Cost Incurred by the Aggregator,
		and Variances for different methods.
	}
	\label{tbl:cost}
\end{table*}

\begin{table*}[t]
	\centering
	\begin{tabular}{|c|c|c|c|c|c|c|c|c|c|}
		\hline
		& \DE $(d=2)$ & \DE $(d=32)$ & \DE $(d=2^{10})$ & \SHE & \THE ($\theta=1$) & \SUE & \OUE    & \BLH & \OLH
\\ \hline
$\epsilon=0.5$
&$3.92$
&$75.20$
&$2432.40$
&$32.00$
&$19.44$
&$15.92$
&$15.67$
&$16.67$
&$15.67$

\\ \hline
$\epsilon=1.0$
&$0.92$
&$11.08$
&$347.07$
&$8.00$
&$5.46$
&$3.92$
&$3.68$
&$4.68$
&$3.68$

\\ \hline
$\epsilon=2.0$
&$0.18$
&$0.92$
&$25.22$
&$2.00$
&$1.50$
&$0.92$
&$0.72$
&$1.72$
&$0.72$

\\ \hline
$\epsilon=4.0$
&$0.02$
&$0.03$
&$0.37$
&$0.50$
&$0.34$
&$0.18$
&$0.08$
&$1.08$
&$0.08$

\\ \hline

	\end{tabular}
	\vspace{0.1cm}
	\caption{
		Numerical Values of Analytical Variance of Different Methods
	}
	\label{tbl:cost_comp}
\end{table*}

\begin{figure*}[!h]
	\centering
	\subfigure[Vary $\epsilon$]{
		\label{fig:e_d}
		\includegraphics[width=0.95\columnwidth]{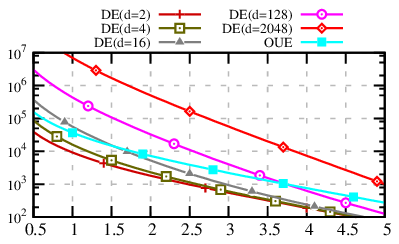}
	}
	\subfigure[Vary $\epsilon$ (fixing $d=2^{10}$)]{
		\label{fig:all_e_l_large}
		\includegraphics[width=0.95\columnwidth]{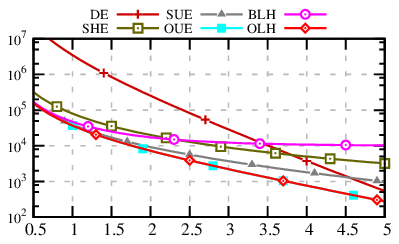}
	}	
	
	\caption{Numerical Values of Analytical Variance of Different Methods}
	\label{fig:cmp_e}
\end{figure*}

We have cast most of the LDP protocols proposed in the literature into our framework of pure LDP protocols.  Doing so also enables us to generalize and optimize existing protocols.  Now we are able to answer the question: Which LDP protocol should one use in a given setting?  

\mypara{Guideline.}
Table~\ref{tbl:cost} lists the major parameters for the different protocols.  Histogram encoding and unary encoding requires $\Theta(d)$ communication cost, and is expensive when $d$ is large.  Direct encoding and local hashing require $\Theta(\log d)$ or $\Theta(\log n)$ communication cost, which amounts to a constant in practice.  All protocols other than \DE have $O(n\cdot d)$ computation cost to estimate frequency of all values.  

Numerical values of the approximate variances using \eqref{eq:absvar} for all protocols are given in Table~\ref{tbl:cost_comp} and Figure~\ref{fig:cmp_e}.  Our analysis gives the following guidelines for choosing protocols.
\begin{itemize}
 \item
When $d$ is small, more precisely, when $d < 3e^\epsilon + 2$, \DE is the best among all approaches.

 \item
When $d > 3e^\epsilon + 2$, and the communication cost $\Theta(d)$ is acceptable, one should use \OUE.  (\OUE has the same variance as \OLH, but is easier to implement and faster because it does not need to use hash functions.)

 \item
When $d$ is so large that the communication cost $\Theta(d)$ is too large, we should use \OLH.  It offers the same accuracy as \OUE, but has communication cost $O(\log d)$ instead of $O(d)$.

\end{itemize}

\mypara{Discussions.}   In addition to the guidelines, we make the following observations.
Adding Laplacian noises to a histogram is typically used in a setting with a trusted data curator, who first computes the histogram from all users' data and then adds the noise.  \SHE applies it to each user's data.  Intuitively, this should perform poorly relative to other protocols specifically designed for the local setting.  However, \SHE performs very similarly to \BLH, which was specifically designed for the local setting.  In fact, when $\epsilon > 2.5$, \SHE performs better than \BLH.

While all protocols' variances depend on $\epsilon$, the relationships are different.   \BLH is least sensitive to change in $\epsilon$ because binary hashing loses too much information.  Indeed, while all other protocols have variance goes to $0$ when $\epsilon$ goes to infinity, \BLH has variance goes to $n$.  
\SHE is slightly more sensitive to change in $\epsilon$.  \DE is most sensitive to change in $\epsilon$; however, when $d$ is large, its variance is very high.  
\OLH and \OUE are able to better benefit from an increase in $\epsilon$, without suffering the poor performance for small $\epsilon$ values. 
Another interesting finding is that when $d=2$, the variance of \DE is $\frac{e^\epsilon}{(e^\epsilon -1)^2}$, which is exactly $\frac{1}{4}$ of that of \OUE and \OLH, whose variances do not depend on $d$.  
Intuitively, it is easier to transmit a piece of information when it is binary, i.e., $d=2$.  As $d$ increases, one needs to ``pay'' for this increase in source entropy by having higher variance.  However, it seems that there is a cap on the ``price'' one must pay no matter how large $d$ is, i.e., \OLH's variance does not depend on $d$ and is always $4$ times that of \DE with $d=2$.  There may exist a deeper reason for this rooted in information theory. Exploring these questions is beyond the scope of this paper.

\begin{figure*}[t]
	\centering
	\subfigure[Vary $d$ (fixing $\epsilon=4$)]{
		\label{fig:match1_l_e=04}
		\includegraphics[width=0.95\columnwidth]{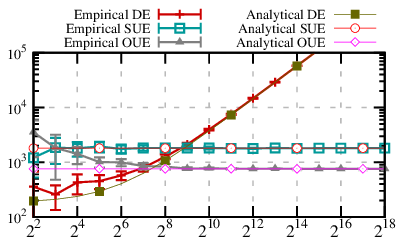}
	}
	\subfigure[Vary $d$ (fixing $\epsilon=4$)]{
		\label{fig:match2_l_e=4}
		\includegraphics[width=0.95\columnwidth]{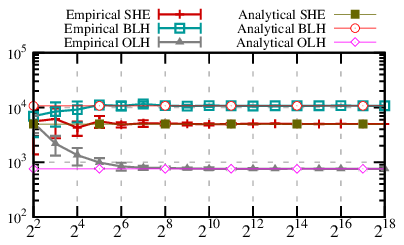}
	}	
	
%
	\subfigure[Vary $\epsilon$ (fixing $d=2^{10}$)]{
		\label{fig:match1_e_l=10}
		\includegraphics[width=0.95\columnwidth]{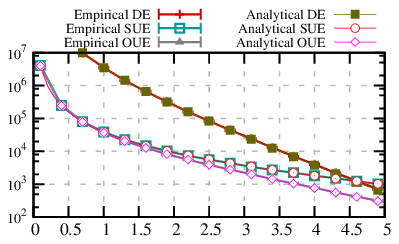}
	}
	\subfigure[Vary $\epsilon$ (fixing $d=2^{10}$)]{
		\label{fig:match2_e_l=10}
		\includegraphics[width=0.95\columnwidth]{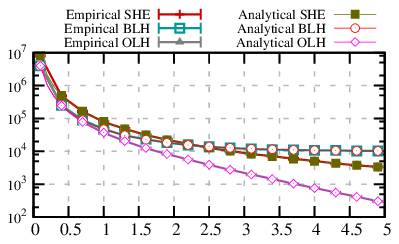}
	}	
	
	\caption{Comparing Empirical and Analytical Variance.}
	\label{fig:empvsana}
\end{figure*}

\section{Experimental Evaluation}\label{sec:eval}

We empirically evaluate these protocols on both synthetic and real-world datasets. All experiments are run ten times and we plot the mean and standard deviation.

\subsection{Verifying Correctness of Analysis}

The conclusions we drew above are based on analytical variances.  We now show that our analytical results of variances match the empirically measured squared errors.
For the empirical data, we issue queries using the protocols and measure the average of the squared errors, namely, $\frac{1}{d}\sum_{i\in[d]}\left[\tilde{c}(i)-nf_i\right]^2$, where $f_i$ is the fraction of users taking value $i$.  We run queries for all $i$ values and repeat for ten times.  We then plot the average and standard deviation of the squared error.  We use synthetic data generated by following the Zipf's distribution (with distribution parameter $s=1.1$ and $n=10,000$ users), similar to experiments in~\cite{rappor}.

Figure~\ref{fig:empvsana} gives the empirical and analytical results for all methods.  In Figures~\ref{fig:match1_l_e=04} and~\ref{fig:match2_l_e=4}, we fix $\epsilon=4$ and vary the domain size.  For sufficiently large $d$ (e.g., $d\geq 2^6$), the empirical results match very well with the analytical results.  When $d<2^6$, the analytical variance tends to underestimate the variance, because in \eqref{eq:absvar} we ignore the $f_i$ terms.  Standard deviation of the measured squared error from different runs also decreases when the domain size increases. In Figures~\ref{fig:match1_e_l=10} and~\ref{fig:match2_e_l=10}, we fix the domain size to $d=2^{10}$ and vary the privacy budget.  We can see that the analytical results match the empirical results for all $\epsilon$ values and all methods.

\subsection{Towards Real-world Estimation}

We run \OLH, \BLH, together with \rappor, on real datasets. The goal is to understand how does each protocol perform in real world scenarios and how to interpret the result. Note that \rappor does not fall into the pure framework of LDP protocols so we cannot use Theorem \ref{thm:pure_var} to obtain the variance analytically. Instead, we run experiments to examine its performance empirically. 
We implement \rappor in Python. The regression part, which \rappor introduces to handle the collisions generated by the hash functions, is implemented using Scikit-learn library~\cite{sklearn}. We use $128$-bit Bloom filter, 2 hash functions and 8 and 16 cohorts in \rappor.

\mypara{Datasets.}
We use the \textbf{Kosarak} dataset~\cite{Kosarak}, which contains the click stream of a Hungarian news website.
There are around $8$ million click events for $41,270$ different pages.
The goal is to estimate the popularity of each page, assuming all events are reported.

\subsubsection{Accuracy on Frequent Values}
One goal of estimating a distribution is to find out the frequent values and accurately estimate them.
We run different methods to estimate the distribution of the Kosarak dataset. After the estimation, we issue queries for the 30 most frequent values in the original dataset. We then calculate the average squared error of the 30 estimations produced by different methods. Figure~\ref{fig:varkosarak} shows the result. 
We try \rappor with both 8 cohorts (RAP(8)) and 16 cohorts (RAP(16)). It can be seen that 
when $\epsilon>1$, \OLH starts to show its advantage. Moreover, variance of \OLH decreases fastest among the four. Due to the internal collision caused by Bloom filters, the accuracy of \rappor does not benefit from larger $\epsilon$. We also perform this experiment on different datasets, and the results are similar.

\begin{figure}[t]
    \centering
    
    \includegraphics[width=0.95\columnwidth]{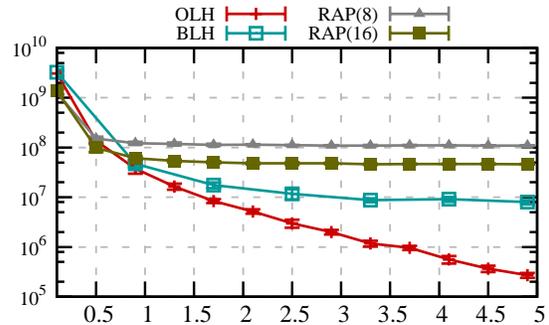}
    
    \caption{
        Average squared error, varying $\epsilon$.}
    \label{fig:varkosarak}
\end{figure}

\begin{figure}[!h]
	\centering
	
	\includegraphics[width=0.95\columnwidth]{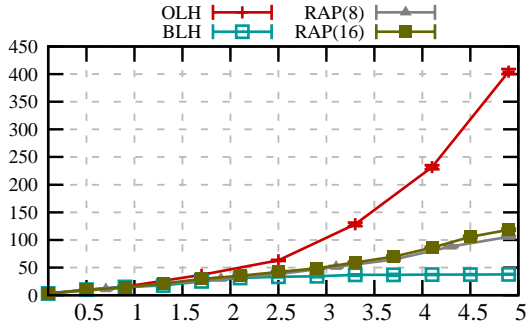}
	
	\caption{
		Number of true positives, varying $\epsilon$, using significance threshold. The dashed line corresponds to the average number of items identified.}
	\label{fig:kosarak_vary_e}
\end{figure}

\begin{figure*}[!h]
    \centering
    \centering

    \centering
    \subfigure[Number of True Positives]{
        \label{fig:rockyoutp}
        \includegraphics[width=0.95\columnwidth]{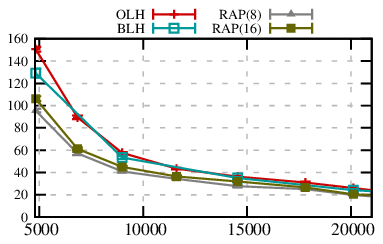}
    }
    \subfigure[Number of False Positives]{
        \label{fig:rockyoufp}
        \includegraphics[width=0.95\columnwidth]{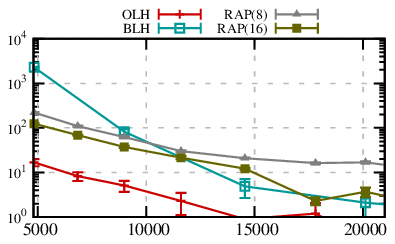}
    }
    \caption{
        Results on Kosarak dataset.
        The $y$ axes are the number of identified hash values that is true/false positive. The $x$ axes are the threshold. We assume $\epsilon=4$.}
    \label{fig:rockyou}
\end{figure*}

%
%
%
%
%
\subsubsection{Distinguish True Counts from Noise}

Although there are noises, infrequent values are still unlikely to be estimated to be frequent.
Statistically, the frequent estimates are more reliable, because the probability it is generated from an infrequent value is quite low.
However, for the infrequent estimates, we don't know whether it comes from an originally infrequent value or a zero-count value.
Therefore, after getting the estimation, we need to choose which estimate to use, and which to discard.

\mypara{Significance Threshold.}
In~\cite{rappor}, the authors propose to use the significance threshold. After the estimation, all estimations above the threshold are kept, and those below the threshold $T_s$ are discarded.
$$T_s = \Phi^{-1}\left(1-\frac{\alpha}{d}\right) \sqrt{\mathrm{Var^*}},$$
where $d$ is the domain size, $\Phi^{-1}$ is the inverse of the cumulative density function of standard normal distribution, and the term inside the square root is the variance of the protocol.
Roughly speaking, the parameter $\alpha$ controls the number of values that originally have low frequencies but estimated to have frequencies above the threshold (also known as false positives). We use $\alpha=0.05$ in our experiment.

For the values whose estimations are discarded, we don't know for sure whether they have low or zero frequencies. Thus, a common approach is to assign the remaining probability to each of them uniformly.

Recall $\mathrm{Var^*}$ is the term we are trying to minimize. So a protocol with small estimation variance for zero-probability values will have a lower threshold, and thus be able to detect more values reliably.

\mypara{Number of Reliable Estimation.}
We run different protocols using the significance threshold $T_s$ on the Kosarak dataset. Note that $T_s$ will change as $\epsilon$ changes. We define a true (false) positive as a value that has frequency above (below) the threshold, and is estimated to have frequency above the threshold. In Figure~\ref{fig:kosarak_vary_e}, we show the number of true positives versus $\epsilon$. As $\epsilon$ increases, the number of true positives increases. When $\epsilon=4$, \rappor can output 75 true positives, \BLH can only output 36 true positives, but \OLH can output nearly 200 true positives.
We also note that the output sizes are similar for \rappor and \OLH, which indicates that \OLH gives out very few false positives compared to \rappor. The cohort size does not affect much in this setting.

\subsubsection{On Information Quality}
Now we test both the number of true positives and false positives, varying the threshold. We run \OLH, \BLH and \rappor on the Kosarak dataset. 

As we can see in Figure~\ref{fig:rockyoutp}, fixing a threshold, \OLH and \BLH performs similarly in identifying true positives, which is as expected, because frequent values are rare, and variance does not change much the probability it is identified. \rappor performs slightly worse because of the Bloom filter collision.

As for the false positives, as shown in Figure~\ref{fig:rockyoufp}, different protocols perform quite differently in eliminating false positives. When fixing $T_s$ to be $5,000$, \OLH produces tens of false positives, but \BLH will produce thousands of false positives. The reason behind this is that, for the majority of infrequent values, their estimations are directly related to the variance of the protocol. A protocol with a high variance means that more infrequent values will become frequent during estimation.
As a result, because of its smallest $\mathrm{Var^*}$, \OLH produces the least false positives while generating the most true positives.

\section{Related Work}\label{sec:related}
The notion of differential privacy and the technique of adding noises sampled from the Laplace distribution were introduced in~\cite{DMNS06}.  Many algorithms for the centralized setting have been proposed.  See~\cite{dpbook} for a theoretical treatment of these techniques, and~\cite{Li2016book} for a treatment from a more practical perspective.  It appears that only algorithms for the LDP settings have seen real world deployment.
Google deployed \rappor~\cite{rappor} in Chrome, and Apple~\cite{Apple} also uses similar methods to help with predictions of spelling and other things.

State of the art protocols for frequency estimation under LDP are RAPPOR by Erlingsson et al.~\cite{rappor} and Random Matrix Projection (\BLH) by Bassily and Smith~\cite{Bassily2015local}, which we have presented in Section~\ref{sec:background} and compared with in detail in the paper.  These protocols use ideas from earlier work~\cite{mishra2006privacy,Duchi:2013:LPS}.  Our proposed Optimized Unary Encoding (\OUE) protocol builds upon the Basic RAPPOR protocol in~\cite{rappor}; and our proposed Optimized Local Hashing (\OLH) protocol is inspired by \BLH in~\cite{Bassily2015local}.
Wang et al.~\cite{wang2016private} uses both generalized random response (Section~\ref{sec:direct}) and Basic RAPPOR for learning weighted histogram.
Some researchers use existing frequency estimation protocols as primitives to solve other problems in LDP setting.  
For example, Chen et al.~\cite{samsung_location} uses \BLH~\cite{Bassily2015local} to learn location information about users.
Qin et al.~\cite{ccs16} use RAPPOR~\cite{rappor} and \BLH~\cite{Bassily2015local} to estimate frequent items where each user has a set of items to report.
These can benefit from the introduction of \OUE and \OLH in this paper.  

There are other interesting problems in the LDP setting beyond frequency estimation.  In this paper we do not study them.  One problem is to identify frequent values when the domain of possible input values is very large or even unbounded, so that one cannot simply obtain estimations for all values to identify which ones are frequent.  This problem is studied in~\cite{hsu2012distributed,Bassily2015local,rappor2}.  Another problem is estimating frequencies of itemsets~\cite{EGS03,ESA+02}.  {Nguy{\^e}n et al.~\cite{samsung} studied how to report numerical answers (e.g., time of usage, battery volume) under LDP.  
When these protocols use frequency estimation as a building block (such as in~\cite{rappor2}), they can directly benefit from results in this paper.  Applying insights gained in our paper to better solve these problems is interesting future work.


Kairouz et al.~\cite{kairouz2014extremal} study the problem of finding the optimal LDP protocol for two goals: (1) hypothesis testing, i.e., telling whether the users' inputs are drawn from distribution $P_0$ or $P_1$, and (2) maximize mutual information between input and output.  We note that these goals are different from ours.  Hypothesis testing does not reflect dependency on $d$.   Mutual information considers only a single user's encoding, and not aggregation accuracy.  For example, both global and local hashing have exactly the same mutual information characteristics, but they have very different accuracy for frequency estimation, because of collisions in global hashing.  Nevertheless, it is found that for very large $\epsilon$'s, \protRandom{} is optimal, and for very small $\epsilon$'s, \BLH is optimal.  This is consistent with our findings.  However, analysis in~\cite{kairouz2014extremal} did not lead to generalization and optimization of binary local hashing, nor does it provide concrete suggestion on which method to use for a given $\epsilon$ and $d$ value.

\section{Conclusion}\label{sec:conclude}
In this paper, we study frequency estimation in the Local Differential Privacy (LDP) setting.  We have introduced a framework of pure LDP protocols together with a simple and generic aggregation and decoding technique.  This framework enables us to analyze, compare, generalize, and optimize different protocols, significantly improving our understanding of LDP protocols.
More concretely, we have introduced the Optimized Local Hashing (\OLH) protocol, which has much better accuracy than previous frequency estimation protocols satisfying LDP.
We provide a guideline as to which protocol to choose in different scenarios. Finally we demonstrate the advantage of the \OLH in both synthetic and real-world datasets.
For future work, we plan to apply the insights we gained here to other problems in the LDP setting, such as discovering heavy hitters when the domain size is very large. 
{
\bibliographystyle{acm}
\bibliographystyle{abbrv}
\bibliography{privacy,Ninghui,password}
}
\appendix
\section{Additional Evaluation}\label{sec:addi_eval}

This section provides additional experimental evaluation results. We first try to measure average squared variance on other datasets. Although \rappor did not specify a particular optimal setting, we vary the number of cohorts and find differences. In the end, we use different privacy budget on the Rockyou dataset.

\begin{figure*}[!h]
	\centering
	\subfigure[Zipf's Top 100 Values]{
		\label{fig:zipfs100}
		\includegraphics[width=0.95\columnwidth]{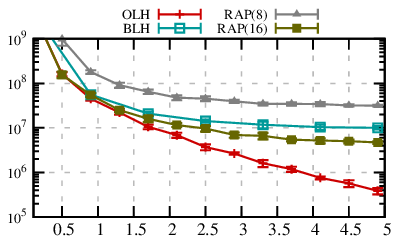}
	}
	\subfigure[Normal Top 100 Values]{
		\label{fig:normal100}
		\includegraphics[width=0.95\columnwidth]{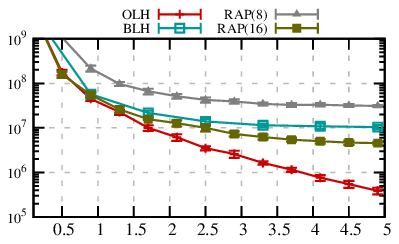}
	}
	
	\centering
	\subfigure[Zipf's Top 50 Values]{
		\label{fig:zipfs50}
		\includegraphics[width=0.95\columnwidth]{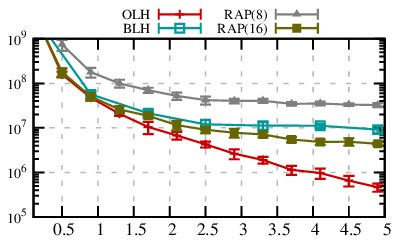}
	}
	\subfigure[Zipf's Top 10 Values]{
		\label{fig:zipfs10}
		\includegraphics[width=0.95\columnwidth]{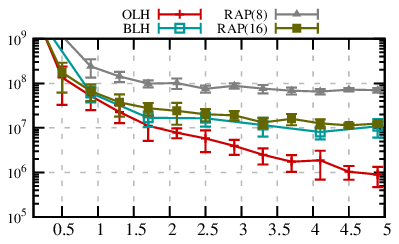}
	}
	
	\caption{
		Average squared errors on estimating a distribution of $10000000$ points. \rappor is used with $128$-bit long Bloom filter and 2 hash functions.}
	\label{fig:syn}
	
\end{figure*}
\begin{figure}[t]
	\centering
	
	\includegraphics[width=0.95\columnwidth]{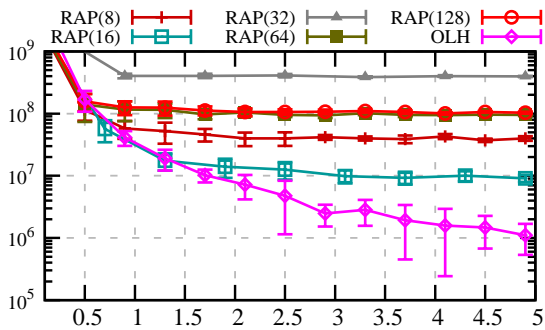}
	
	\caption{
		Average squared error on estimating a normal distribution of $1000000$ points. \rappor is used with $128$-bit long Bloom filter and 2 hash functions.}
	\label{fig:rappor}
\end{figure}

\subsection{Effect of Cohort Size}

In~\cite{rappor}, the authors did not identify the best cohort size to use. Intuitively, if there are too few cohorts, many values will be hashed to be the same in the Bloom filter, making it difficult to distinguish these values. If there are more cohorts, each cohort cannot convey enough useful information. 
Here we try to test what cohort size we should use. 
We generate $10000000$ values following the Zipf's distribution (with parameter 1.5), but only use the first 128 most frequent values because of memory limitation caused by regression part of \rappor. We then run \rappor using 8, 16, 32, and 64, and 128 cohorts. We measure the average squared errors of queries about the top 10 values, and the results are shown in Figure~\ref{fig:rappor}. As we can see, more cohorts does not necessarily help lower the squared error because the reduced probability of collision within each cohort. But it also has the disadvantage that each cohort may have insufficient information. 
It can be seen \OLH still performs best.

\begin{figure*}[!t]
	\centering
	\subfigure[Number of True Positives $\epsilon=2$]{
		\label{fig:rockyoutpe2}
		\includegraphics[width=0.95\columnwidth]{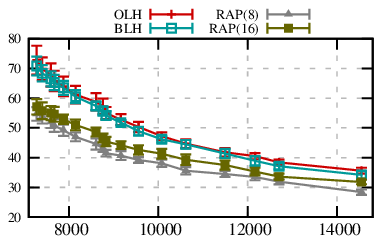}
	}
	\subfigure[Number of False Positives $\epsilon=2$]{
		\label{fig:rockyoufpe2}
		\includegraphics[width=0.95\columnwidth]{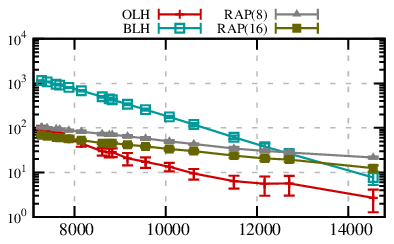}
	}
	
	\centering
	\subfigure[Number of True Positives $\epsilon=1$]{
		\label{fig:rockyoutpe1}
		\includegraphics[width=0.95\columnwidth]{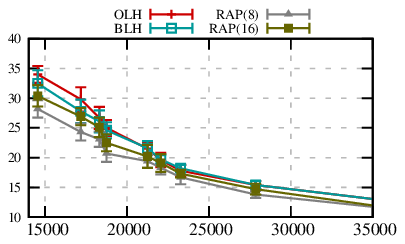}
	}
	\subfigure[Number of False Positives $\epsilon=1$]{
		\label{fig:rockyoufpe1}
		\includegraphics[width=0.95\columnwidth]{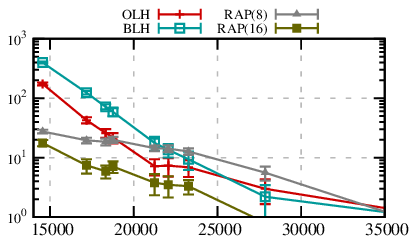}
	}
	\caption{
		Results on Kosarak dataset for $\epsilon=2$ and $1$.
		The $y$ axes are the number of identified hash values that is true/false positive. The $x$ axes are the threshold.}
	\label{fig:rockyou_other}
\end{figure*}

\subsection{Performance on Synthetic Datasets}
In Figure~\ref{fig:syn}, we test performance of different methods on synthetic datasets. We generate $10000000$ points (rounded to integers) following a normal distribution (with mean 500 and standard deviation 10) and a Zipf's distribution (with parameter 1.5). The values range from 0 to 1000. We then test the average squared errors on the most frequent 100 values. It can be seen that different methods performs similarly in different distributions. \rappor using 16 cohorts performs better than \BLH. This is because, when the number of cohort is enough, each user kind of has his own hash functions. This can be viewed as a kind of local hashing function. When we only test the top 10 values instead of top 50, RAP(16) and \BLH perform similarly. Note that \OLH still performs best among all distributions.

\subsection{Effect of $\epsilon$ on Information Quality}
We also run experiments on the Kosarak dataset setting $\epsilon=2$ and $1$. It can be seen that as $\epsilon$ decreases, the number of false positives increases for the same threshold. This is as expected, as variance will increase for every estimated value. On the other hand, the number of true positives does not change much. For instance, for the \OLH method, at threshold $T=10000$, we can recover nearly $50$ true positive values both when $\epsilon$ is $2$ and $4$. However, there are $14$ instead of $3$ false positives when $\epsilon=2$ compared to when $\epsilon=4$.
The advantage of \OLH in the case when $\epsilon=1$ is not significant.
RAP(16) consistently performs better than RAP(8) in this case.

%
%
%



\end{document}